\definecolor{purple}{rgb}{.5,0,1}
\definecolor{orange}{rgb}{1,.5,0}
\definecolor{pink}{rgb}{1,0,.5}
\newcommand{\nn}{\notag}
\numberwithin{equation}{section}
\newtheorem{theorem}{Theorem}[section]
\newtheorem{lemma}[theorem]{Lemma}
\newtheorem{corollary}[theorem]{Corollary}
\newtheorem{definition}[theorem]{Definition}
\newtheorem{remark}[theorem]{Remark}
\newtheorem*{theorem*}{Theorem}
\newtheorem*{lemma*}{Lemma}
\newtheorem*{remark*}{Remark}
\DeclareMathOperator{\supp}{supp}
\DeclareMathOperator{\tr}{tr}
\DeclareMathOperator{\Ran}{Ran}
\DeclareMathOperator{\dist}{dist}
\DeclareMathOperator{\Rea}{Re}
\newcommand\R{\mathbb R}
\newcommand\N{\mathbb N}
\newcommand\C{\mathbb C}
\newcommand\Z{\mathbb Z}
\newcommand\cW{\mathcal{W}}
\newcommand\e{\mathrm{e}}
\renewcommand\P{\mathbb P}
\newcommand\E{\mathbb E}
\newcommand\cL{\mathcal{L}}
\newcommand\cN{\mathcal{N}}
\newcommand\cX{\mathcal{X}}
\newcommand\cY{\mathcal{Y}}
\newcommand\cH{\mathcal{H}}
\renewcommand{\d}{\mathrm{d}}
\newcommand{\pr}{\prime}
\newcommand\what{\widehat}
\newcommand\wtilde{\widetilde}
\newcommand\beq{\begin{equation}}
\newcommand\eeq{\end{equation}}
\newcommand\be{\begin{equation}\begin{aligned}}
\newcommand\ee{\end{aligned}\end{equation}}
\newcommand{\abs}[1]{\left\lvert #1 \right\rvert}
\newcommand{\norm}[1]{\left\lVert #1 \right\rVert}
\newcommand{\scal}[1]{\left\langle #1 \right\rangle}
\newcommand{\set}[1]{\left\{ #1 \right\}}
\newcommand{\pa}[1]{\left( #1 \right)}
\newcommand{\hnorm}[1]{\left\{ \!\left\{ #1\right\}\! \right\}}
\newcommand{\cl}[1]{\lceil #1 \rceil}
\newcommand\La{\Lambda}
 \newcommand{\eq}[1]{\eqref{#1}}
\newcommand{\up}[1]{^{\left(#1\right)}}
\newcommand{\qtx}[1]{\quad\text{#1}\quad}
\newcommand{\mqtx}[1]{\; \ \text{#1}\; \  }
\newcommand{\sqtx}[1]{\;\text{#1}\;}
\newcommand{\bD}{\boldsymbol{\Delta}}
\newcommand{\tfd}{\pa{1- \tfrac{1}{\Delta}}}
\newcommand{\fd}{1- \frac{1}{\Delta}}
\newcommand{\sfd}{1- \tfrac{1}{\Delta}}
\newcommand{\nfd}{\pa{1- \frac{1}{\Delta}}}
\newcommand{\prr}{{\pr\pr}}
\begin{document}

\title[Slow propagation of information on  the random XXZ  spin chain]{Slow propagation of information on the  random XXZ  quantum spin chain}

\author{Alexander Elgart}
\address[A. Elgart]{Department of Mathematics, Virginia Tech; Blacksburg, VA, 24061, USA}
 \email{aelgart@vt.edu}

\author{Abel Klein}
\address[A. Klein]{Department of Mathematics, University of California, Irvine;   
Irvine, CA 92697,  USA}
 \email{aklein@uci.edu}

\thanks{A.E. was  supported in part by the NSF under grant DMS-1907435 and the Simons Fellowship in Mathematics Grant 522404.}

%\date{Version of \today}

\begin{abstract} 
The random XXZ  quantum spin chain   manifests localization  (in the  form  of quasi-locality) in any fixed energy interval, as previously proved by the authors.  In this article it is shown that  this property implies slow propagation of information, one of the putative signatures of   many-body localization (MBL), in the same energy interval.

\end{abstract}
\keywords{Many-body localization, MBL, random XXZ spin chain, quasilocality}

\subjclass[2000]{82B44, 82C44, 81Q10, 47B80, 60H25}

\maketitle
\tableofcontents	

\section{Introduction}

The folk wisdom in physics is that particle interactions tend to delocalize (or, more precisely, dynamically thermalize) an isolated quantum system. In contrast, the presence of disorder in  the single particle context leads to the emergence of localization. It is, therefore, an interesting question of how such systems behave in the presence of both disorder and interactions. It has been proposed in the physics literature that in dimension one strong disorder leads to the so-called many-body localized (MBL) phase,  presumed to be  characterized by several exotic properties, such as the absence of thermalization, slow propagation  of information,  zero-velocity Lieb-Robinson bound,  Poisson distribution for level statistics, and area-law entanglement of eigenstates. This led to significant theoretical and experimental work in condensed matter physics over the last decade that focused on this phenomenon and its implications (see the physics reviews \cite{NandHuse,AL,abanin2019}). 

Let us stress that, as of today,  there is no unifying physics theory for MBL as well as no clear consensus among the physics community on the existence and stability of an MBL phase in the thermodynamic limit, even in the strong disorder regime, due to new numerical evidence and some theoretical work \cite{SBPV,SBPV1,Kiefer,SelsP,ABANIN2021,sierant20,morningstar,sierant22}.  Moreover, the hierarchical relationship between the proposed properties  (i.e., whether one of them implies others) is also not clear. The most significant difficulty in analyzing such models is that the dimension of the underlying Hilbert space grows exponentially fast with the system size $L$. Such growth limits  reliable numerics to small $L$ and makes it exceedingly hard to capture the rare but potentially critical events (such as resonances) that inevitably occur as the system size increases.

One of the central questions among the physics community is  what  a  suitable (and malleable) definition of MBL should be.  A popular choice there is the existence of Local Integrals of Motion (LIOM), which in particular implies a form of dynamical localization \cite{nachtergaeleslow}. However, the existence of LIOM is based on the exact diagonalization of the entire   Hamiltonian, a very strong assumption (it implies, in particular, the absence of a phase transition for the infinite system - a debatable assertion even among physicists).

In \cite{EK22}, we  introduced and  proved  a suitably defined notion of quasi-locality associated with  the finite XXZ spin-$\frac12$ random chain in any fixed energy interval  in a certain parameter region, that includes the limiting cases of strong disorder and weak interactions. The disordered XXZ model is one of the most common  models used  in the physics and mathematics literature for the study of  MBL  (e.g., \cite{abanin2019}).
We  consider finite volume Hamiltonians, which are what is typically discussed in the physics literature   {(e.g., \cite{abanin2019}).}    An  important feature of our result is that while the parameter region depends on the  energy interval, it is independent of system size.  

 A fixed energy interval is  sometimes referred to in physics as the zero temperature regime and has to be contrasted with the infinite-temperature regime, that is,  the whole energy spectrum\footnote{Note that  the diameter  of  the spectrum of the interacting system grows with the system size.}. While we argued in \cite{EK22} that our quasi-locality  property (different from any  of the physics signatures of MBL mentioned above) is very natural from a mathematical point of view,  we do not expect it to be useful  for studying the infinite-temperature MBL.

While the quasi-locality property is  amenable to rigorous analysis,  we did not attempt in \cite{EK22} to explore its connection to the putative manifestations of MBL proposed in the physics literature and mentioned above. The current work shows that the quasi-locality property implies slow propagation of information (one of the aforementioned signature properties) in the same energy interval on which the quasi-locality holds. This implication was not obvious to us when \cite{EK22} was completed, as it addresses a  different object and its proof required a new set of ideas.

Let us mention that the other proposed indicators of MBL-type localization (besides slow information propagation) seem to be  either unaccessible or significantly harder to reach:

\begin{enumerate}
\item It is expected that generic quantum many-body systems exhibit thermalization or even satisfy the eigenstate thermalization hypothesis. However, up to now mathematically tangible arguments have not been found for proving thermalization or its failure outside the realm of exactly solvable systems.  
\item As we already mentioned, the LIOMs' approach hinges on a complete (for all energies) localization of the underlying Hamiltonian. Such property has been only achieved for some exactly solvable models. Similarly, the proposal that the zero velocity Lieb-Robinson bound holds for random systems (see \cite{gogolin})  relies on   complete localization. A modified analogue of the Lieb-Robinson bound  was proven in  the droplet spectrum ( the special interval at the bottom of the spectrum) of the random XXZ spin chain  \cite[Theorem 3]{EKS3}. However, the form of the resulting bound  is interval-dependent, making it  unlikely  to be a  suitable candidate for an MBL characteristic for systems where a phase transition could potentially occur.

\item
In \cite{BeW1}, it was shown that, even without disorder, the excited states of the XXZ spin chain (with  pretty much any choice for the background potential) satisfy the area law with logarithmic corrections for any fixed energy interval. While the area law without these corrections holds in the  droplet spectrum of the random XXZ spin chain  \cite{BeW1}, it is not expected to persist beyond this interval on physical grounds \cite{BN,gogolin}. Thus it is hard to  probe localization  at higher energies using an area law-type criterion alone.

\item Another proposal is to link  localization of the interacting
ground state with exponential decay of the zero temperature  grand-canonical truncated correlations of local operators \cite{Mas2}. This result involves multiple limits (including zero temperature and thermodynamic ones), so it is not clear how to formulate it as a statement that holds for a finite system. In addition,  it appears that the result is highly sensitive to the  order in which these limits are taken. 

\end{enumerate}

Let us give an informal account of our result
(the formal  statement can be found in Section \ref{secmodel} below). The random XXZ quantum spin-$\frac 12$ chain on the finite discrete   interval (i.e.,   an interval in $\Z$) $\La_L=[1,L]$ is  given by the Hamiltonian
$H^L_\omega=H^L_0+\lambda V^L_\omega$
acting on   $\bigotimes_{i\in \La_L} \C_i^2$  (here $\C_i^2$ is a copy of $\C^2$), where
\[
H_0=\sum_{i=1}^{L-1}{\tfrac{1}{4}\pa{{I}-\sigma_i^z\sigma_{i+1}^z}-\tfrac{1}{4\Delta}\pa{\sigma_i^x\sigma_{i+1}^x+\sigma_i^y\sigma_{i+1}^y}}
\] (here $\sigma^{x,y,z}$ are  the standard Pauli matrices and $\Delta>1$ is the anisotropy   parameter), and 
 $V^L_\omega=\sum_{i=1}^L \omega_i \mathcal{N}_i$ is the random field (here $\mathcal{N}= \tfrac{1}{2} (I-\sigma^z)$ ,  $\omega = \set{\omega_i}_{i\in\Z}$  is a family of independent identically distributed positive  random variables with  sufficiently regular randomness, and 
$\lambda >0$ is the disorder parameter).

The following theorem is an informal statement of our main result, Theorem \ref{thm:localmodell}.

\begin{theorem*}[ Slow propagation of information, informal]\label{thm:localmodellinf}
 For a given energy $E>0$, there exists a non-trivial region in the $(\Delta,\lambda)$ parameter space, such that for any fixed point in this region, scales $L,\ell\in\N$, and all $t\in\R$,  the following holds: 
For every  observable   $\mathcal O$   supported on a discrete interval  $[a,b] \subset \La_L$, 
there exists an observable ${\mathcal O}_t={\mathcal O}(t,E,\ell  ,L)$, supported on   $[a-  c_E\ell,b +  c_E\ell]\cap\La_L $,   such that
\be\label{eq:sltr}
\E\norm{P_{[0,E]}\pa{\e^{itH^L_\omega}\mathcal O\e^{-itH^L_\omega}-{\mathcal O}_t}P_{[0,E]}}\le   C_{E}\norm{\mathcal O}(|t|+1)^{q_E} L^{{\xi_{E} }}\e^{- \theta_{E}\, \ell},
\ee
where $\E$ stands for the expectation with respect to $\omega$,   $P_{[0,E]}$ is the spectral projection of $H^L_\omega$ onto $[0,E]$, and $c_E, \xi_E,\theta_E >0$.
 \end{theorem*}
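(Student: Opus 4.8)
\emph{Overview.} The plan is to strip the Heisenberg dynamics off $\mathcal O$ by slicing the energy window, and then to invoke the quasi-locality of \cite{EK22} — in the form that low-energy eigenstates are exponentially localized and factorize across well-separated regions with an additive splitting of the energy — to show that, inside the low-energy sector, the evolved observable cannot feel the far environment. The operator I would produce is the partial expectation of the evolved observable in the fully polarized (``vacuum'') state $|{\uparrow\cdots\uparrow}\rangle_{Y^c}$ of the environment,
\[
\mathcal O_t:=\langle{\uparrow\cdots\uparrow}|_{Y^c}\;\e^{itH^L_\omega}\mathcal O\,\e^{-itH^L_\omega}\;|{\uparrow\cdots\uparrow}\rangle_{Y^c}\otimes I_{Y^c},\qquad Y:=[a-c_E\ell,\,b+c_E\ell]\cap\La_L,
\]
which is supported on $Y$ and obeys $\|\mathcal O_t\|\le\|\mathcal O\|$. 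The naive choice — the normalized partial trace $\Pi_Y$ — fails: it averages the environment over all $2^{|Y^c|}$ configurations, the overwhelming majority of them high-energy, washing out the $O(1)$ low-energy matrix elements; conditioning on the environment vacuum is precisely what keeps them.

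\emph{The estimate.} Partition $[0,E]$ into $N$ intervals $I_k$ of length $E/N$, $E_k:=\inf I_k$, $V_t:=\sum_k\e^{itE_k}P_{I_k}$; orthogonality gives $\|\e^{itH^L_\omega}P_{[0,E]}-V_t\|\le|t|E/N$, so both $P_{[0,E]}\e^{itH^L_\omega}\mathcal O\e^{-itH^L_\omega}P_{[0,E]}$ and $P_{[0,E]}\mathcal O_tP_{[0,E]}$ lie within $2|t|E\|\mathcal O\|/N$ of their analogues with $\e^{itH^L_\omega}P_{[0,E]}$ replaced by $V_t$, and it remains to bound $\E\|P_{[0,E]}(V_t\mathcal O V_t^{*}-\mathcal O_t')P_{[0,E]}\|$ with $\mathcal O_t':=\langle{\uparrow\cdots\uparrow}|_{Y^c}V_t\mathcal O V_t^{*}|{\uparrow\cdots\uparrow}\rangle_{Y^c}\otimes I_{Y^c}$. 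Expand $V_t\mathcal O V_t^{*}=\sum_{E_n,E_m\le E}\e^{it(E_{k(n)}-E_{k(m)})}\langle\psi_n,\mathcal O\psi_m\rangle\,|\psi_n\rangle\langle\psi_m|$ ($k(n)$ the slice of $E_n$) and test both operators against eigenpairs $\psi_n,\psi_m$. By the \cite{EK22} cluster structure, $\psi_n\approx\psi_{n,Y}\otimes\psi_{n,Y^c}$ up to $\e^{-\mu_E c_E\ell}$, provided $c_E$ exceeds the localization length and the $O(E/\lambda)$ maximal extent of a droplet of energy $\le E$ (so no low-energy droplet straddles $\partial Y$); a short computation then gives, for both operators, matrix elements of the shape $\langle\psi_{n,Y^c},\psi_{m,Y^c}\rangle\cdot(\text{phase})\cdot\langle\psi_{n,Y},\mathcal O\psi_{m,Y}\rangle$, the phase being $\e^{it(E_{k(n)}-E_{k(m)})}$ for $V_t\mathcal O V_t^{*}$ and $\e^{it(\wtilde E_n-\wtilde E_m)}$ for $\mathcal O_t'$, where $\wtilde E_n$ is the energy of $\psi_{n,Y}$ with vacuum outside. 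Whenever $\langle\psi_{n,Y^c},\psi_{m,Y^c}\rangle$ is non-negligible, $\psi_n$ and $\psi_m$ share the same far content, so $E_n-\wtilde E_n=E_m-\wtilde E_m$ up to a $\partial Y$-boundary term of size $\e^{-\mu_E c_E\ell}$, and (using also the slice-truncation error $E/N$) the two phases agree up to $|t|(\e^{-\mu_E c_E\ell}+E/N)$. Converting these matrix-element estimates into an operator-norm bound by a Schur/Hilbert--Schmidt argument — legitimate because $\tr P_{[0,E]}$, restricted to the configurations with $O(E/\lambda)$-many droplets permitted by the energy cutoff, is only polynomial in $L$ — and averaging over the disorder yields $\E\|P_{[0,E]}(\e^{itH^L_\omega}\mathcal O\e^{-itH^L_\omega}-\mathcal O_t)P_{[0,E]}\|\lesssim C_E L^{\xi_E}\big(|t|\e^{-\mu_E c_E\ell}+|t|E/N\big)\|\mathcal O\|$; taking $N\gtrsim\e^{\mu_E c_E\ell}$ produces \eqref{eq:sltr} with $\theta_E=\mu_E c_E$, $q_E=1$, and $\xi_E$ the configuration-count exponent.

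\emph{Main obstacle.} The crux is the phase-alignment step — proving that the dynamical phases in $P_{[0,E]}\e^{itH^L_\omega}\mathcal O\e^{-itH^L_\omega}P_{[0,E]}$ and in $P_{[0,E]}\mathcal O_tP_{[0,E]}$ cancel up to $|t|\e^{-\mu_E c_E\ell}$, equivalently that the far environment contributes the \emph{same} amount to the energies of the bra and of the ket. This requires the quantitative content of \cite{EK22}: not just exponential decay of an eigenfunction correlator (which, as the failure of $\Pi_Y$ signals, is too coarse — it cannot register that the evolved observable stays close to $\mathcal O$ on the low-energy sector rather than merely remaining quasi-local), but the sharper factorization of each low-energy eigenstate across a cut near $\partial Y$, together with the additivity of the corresponding energies, valid uniformly in the disorder after averaging. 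Around this, one must still (i) keep the disorder average as a single expectation, absorbing the resulting products of eigenfunction correlators through the trivial a priori bound and Cauchy--Schwarz at the price of a fixed loss in the rate; (ii) dispose of the rare configurations with a droplet close to $\partial Y$, which spoil the factorization, via the probabilistic estimates of \cite{EK22} (which already discount such boundary resonances); and (iii) verify that the Schur step costs only the polynomial factor $L^{\xi_E}$, using the droplet-number bound $O(E/\lambda)$ — the slicing error and the norm bound $\|\mathcal O_t\|\le\|\mathcal O\|$ being then immediate.
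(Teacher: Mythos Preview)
Your approach rests on an \emph{eigenstate factorization} property---that each low-energy eigenstate satisfies $\psi_n\approx\psi_{n,Y}\otimes\psi_{n,Y^c}$ with energies splitting additively across the cut---which you attribute to \cite{EK22}, but which is \emph{not} what that paper establishes. The quasi-locality input available (Theorem~\ref{thmloc} here) controls quantities of the type $\E\|P_-^A f(H^\Lambda) P_+^B\|$: it says that a function of the Hamiltonian, applied to a state with a particle in $A$, cannot move that particle past a large buffer $B$. This is a statement about particle \emph{location}, not about tensor factorization of eigenvectors; many-body eigenstates remain genuinely entangled across the cut even in the localized regime. You recognize this step as the ``main obstacle'' and correctly note that the eigenfunction-correlator bound alone is too coarse, but you then assume exactly the sharper input (factorization plus additivity of energies) that would need to be proved. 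Extracting such a factorization from the quasi-locality bound, if possible at all, would be substantial new work---likely comparable in difficulty to the theorem itself---and without it the phase-alignment argument, and hence the comparison between $V_t\mathcal O V_t^{*}$ and $\mathcal O_t'$, does not go through. (A warning sign: your scheme yields $q_E=1$, while the paper obtains $\langle t\rangle^{2q+4}$; the extra powers of $t$ come precisely from the machinery needed to \emph{avoid} the factorization hypothesis.)

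The paper's route is structurally different. It never touches individual eigenstates. Instead it (i) decomposes the observable as in Lemma~\ref{lem:Mmod}, inserting $P_+$ projections on annular buffer zones so that each piece $T_j$ carries an explicit ``no-particle'' collar; (ii) uses the collar together with the Helffer--Sj\"ostrand formula and the resolvent quasi-locality (Lemmas~\ref{lemHS}--\ref{lem:detachaA}) to replace $\tau_t^\Lambda$ by the decoupled dynamics $\tau_t^{A,A^c}$; and (iii) runs an \emph{induction on the energy threshold} $q\in\tfrac12\N$: the condition $\wtilde M=P_-^A\wtilde M P_-^A$ forces at least one cluster inside $A$, so by the cluster lower bound \eqref{H0W} the complement $A^c$ sits in the strictly smaller window $\check I_{\le q-1}$, and the induction hypothesis (Lemma~\ref{thm:localmodell2}) handles the exterior factor. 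The approximating observable $T_t$ is then assembled as a product of an interior piece $\wtilde M_t$ (the decoupled evolution cut off in energy and dressed with $P_+$ collars, see \eqref{wtildeMt}) and exterior pieces $Y_t^{(L)},Y_t^{(R)}$ built from the previous induction step---not as a vacuum partial expectation. This energy-descent mechanism is the missing idea in your sketch; it is what replaces, and renders unnecessary, the eigenstate factorization you invoke.
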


In general, the  Lieb-Robinson bound   for local spin Hamiltonians \cite{LR,NachtSims}  implies that there is an effective light cone for two-point dynamical correlations for such systems, meaning that these correlations propagate no faster than linearly in time, up to  exponentially small corrections. As a consequence of this bound,  if  $\mathcal O$ is a local observable supported on the discrete interval  $[a,b]$, given $\ell \in \N$ there exists an  observable $\mathcal O_{t,\ell}$, supported on the discrete interval $[a-\ell, b+ \ell]$, that approximates 
  the  (full)  Heisenberg evolution $\e^{itH}\mathcal O\e^{-itH}$: 
\[\norm{\e^{itH}\mathcal O\e^{-itH}-\mathcal O_{t,\ell}}\le C\norm{\mathcal O}e^{-m(\ell-v|t|)},\]
where  $m>0$  and $v>0$ is the velocity in the Lieb-Robinson bound
(see \cite{hastings2010locality}).

For  translation invariant systems, it is expected that  information can indeed spread within the light cone. This should be contrasted with the theorem above, that indicates a much slower rate at which the information spreads  for the random system: For a given value of   $\ell\gg\ln L$, it takes time $t\sim e^{c\ell}$ rather than $t\sim \ell$ until information can potentially escape the corresponding cone.  We note that our result considers propagation of observables within the energy window {  $[0,E]$}, and, in particular, is fully compatible with a possible phase transition for higher energies.

Rigorous results of this kind  have been previously obtained for exactly solvable systems (in fact, with no propagation at all), see, e.g., \cite{ARNSS}, and for the XXZ ferromagnetic spin chain studied here, but restricted to the energy interval $I_{\le 1}$  (introduced in Section \ref{secmodel}) corresponding to so called droplet spectrum   \cite[Theorem~2]{EKS3}.  Albeit  this result provides  a strong bound on the information propagation  speed, it is tailored for $I_{\le 1}$, and the method developed there cannot be adapted to the larger intervals $I_{\le q}$ that are handled in Theorem \ref{thm:localmodell}. While the dependence of our bound on both the system size and time is by no means optimal, it is generally expected that the random XXZ spin chain should exhibit some form of  slow propagation (e.g.,  \cite{abanin2019}).

 We now want to address the presence of the polynomial pre-factor in the volume size in \eqref{eq:sltr}. It is not unusual to have  a volume dependence in  local results concerning random systems  (e.g.,   the multiscale analysis  for random Schr\"odinger operators yields decay in a box of size $L$  for distances $ \ge L^\zeta$, $\zeta \in (0,1)$), as in the localization phenomenon there are two competing effects: A natural tendency for eigenstates to localize versus the small denominator problem, coming from resonances. The former is responsible for the exponential decay in  \eqref{eq:sltr},  whereas  the appearance of the volume prefactor is a manifestation of the latter.  Indeed, the number of resonances is directly related to the density of states, which, for the ferromagnetic XXZ spin system studied here, grows as a power of the volume (with the power increasing  with the energy). For this reason, we do not expect that the bound in  \eqref{eq:sltr} can be significantly improved.

Since the physics literature   mainly considers {\it finite}  spin  systems  (e.g.,  \cite{abanin2019}), the polynomial pre-factor is not a real issue in  the presence of the   exponential decay in $\ell$, as long as   $\ell \ge C\ln L$. 
 Let us also mention that in  (reliable)   numerical experiments on which physicists base their conclusions about the system's behavior, the typical value of $L$ does not exceed a few dozen, in which case our results fit in rather well with the physics picture for {\it all} energies.  

Nonetheless, the presence of the volume factor has the unfortunate side effect that it is not clear whether any conclusions can be drawn about the infinite volume  XXZ model from this estimate. This should be contrasted with the  random  Schr\"odinger operator  case, where the volume  dependence in  the decay estimates  can be overcome to yield Anderson localization in the infinite volume. 
 This disparity can be traced to the radical difference in the rank of the perturbation needed to decouple  the Hamiltonian into two (spatially non-interacting) parts in a discrete Schr\"odinger operator and in a spin chain. In the former case, the rank is comparable with the size of the boundary between these two parts (rank $2$ in one dimension), while in the latter case the rank is comparable with the dimension of the full Hilbert space, due to its tensor product nature.

One way to mitigate the influence of resonances is  to consider a  matrix element analogue of \eqref{eq:sltr}, that is,  to consider  $\abs{\langle \psi, P_{[0,E]}\pa{\e^{itH^L_\omega}\mathcal  O\e^{-itH^L_\omega}-{\mathcal O}_t}P_{[0,E]}\,\phi\rangle}$ for a pair of states $\psi,\phi\in\mathcal H$ instead of  $\norm{P_{[0,E]}\pa{\e^{itH^L_\omega}\mathcal  O\e^{-itH^L_\omega}-{\mathcal O}_t}P_{[0,E]}}$. As a consequence of  \eqref{eq:sltr}, we  obtain a bound on the expectation of this object where the dependence on the volume is  reduced to powers of $\ln L$ instead of $L$, see Corollary \ref{cor:localmodell} below. The price one has to pay here is that a priori the operator ${\mathcal O}_t$ constructed this way depends also on the states $\psi,\phi$.  We expect that this result  may  shed light on  properties of the infinite volume system.

This article is organized as follows: 
Section \ref{secmodel}  starts with the introduction of the XXZ quantum spin-$\frac 12$  chain in  a random field, followed by a short summary (Theorem~\ref{thmloc}) of our localization results in \cite{EK22},  which serves as  the starting point for   the statement of  our main result,  Theorem \ref{thm:localmodell},  exhibiting slow propagation of information under localization.  The statement for  matrix elements is given in Corollary \ref{cor:localmodell}. In Section \ref{sec:ir'} we introduce important  ingredients for the proof of Theorem \ref{thm:localmodell}.   Section~\ref{secprop}   contains the proof of Theorem \ref{thm:localmodell}.  Corollary \ref{cor:localmodell} is proven in Section \ref{secpropmatrix}.

Throughout the paper, we will use generic constants $C, c$, etc., whose values will be allowed to change from line to line, even  in a displayed equation. These constants will not depend on subsets of $\Z$, but    they will,  in general depend on  parameters of the model  such as $\mu$,  $\Delta_0$,  and $ \lambda_0$. When necessary, we will indicate the dependence of a constant on other parameters, say  $q$,  explicitly by writing the constant  as $C_q$, etc. These constants can always be estimated from the arguments, but we will not track the changes to avoid complicating the arguments.

\section{ The model, localization, and the  main result}\label{secmodel}

\subsection{Model description}

Let $\uparrow\rangle: = \begin{pmatrix} 1  \\ 0  \end{pmatrix} $ and  $\downarrow\rangle = \begin{pmatrix} 0  \\ 1 \end{pmatrix} $ denote  the elements of the canonical basis of $\C^2$, called   spin-up  and spin-down, respectively.  Let $\sigma^{x,y,z}$ be  the standard Pauli matrices, $\sigma^\pm=\frac 12( \sigma^x \pm i \sigma^y)$.  Set  $\mathcal{N} = \tfrac{1}{2} ({I}-\sigma^z)$, an operator on $\C^2$, and note that   $\mathcal{N} \uparrow\rangle=0$ and   $\mathcal{N} \downarrow\rangle=\downarrow\rangle$.  We  interpret $\downarrow\rangle$ as a particle, so $\cN$  is the projection onto the spin-down state (or local number operator). 

Let $\cH_i=\cH_{\set{i}}= \C^2_i$ for $i\in \Z$.  Given a vector $v\in \C^2$, we denote by $v_i$ its copy in $\cH_i$.
  If $T$ is an observable (i.e., operator) on $\C^2$, we denote by $T_i$ the observable $T$ acting on $\cH_i$.

The (infinite volume) XXZ quantum spin-$\frac 12$  chain in  a random field is informally given by the Hamiltonian 
\beq \label{infXXZ}
H_\omega=H_0+\lambda V_\omega,
\eeq
acting on   $\bigotimes_{i\in \Z} \cH_i$, where:
\begin{enumerate}
\item The  (disorder) free  Hamiltonian $H_0$ is given by 
\begin{align} \label{hii}
H_0&=\sum_{i\in\Z} \pa{\tfrac{1}{4}\pa{{I}-\sigma_i^z\sigma_{i+1}^z}-\tfrac{1}{4\Delta}\pa{\sigma_i^x\sigma_{i+1}^x+\sigma_i^y\sigma_{i+1}^y}}\\ \nn &
=\sum_{i\in\Z} \pa{\tfrac{1}{4}\pa{{I}-\sigma_i^z\sigma_{i+1}^z}-\tfrac{1}{2\Delta}\pa{\sigma_i^+\sigma_{i+1}^-+\sigma_i^-\sigma_{i+1}^+}},
\end{align}
where    $\Delta>1$ is the anisotropy   parameter, specifying the Ising phase ($\Delta=1$ selects the Heisenberg chain and  $\Delta=\infty$ corresponds to the  the Ising chain).

\item    $V_\omega=\sum_{i\in\Z} \omega_i \mathcal{N}_i$ is the random field, where
 $\omega = \set{\omega_i}_{i\in\Z}$  is a family of independent identically distributed random variables, whose  common probability  distribution $\mu$  is  absolutely continuous with a bounded density  and satisfies
\beq\label{mu}
  \set{0,1}\subset \supp \mu\subset[0,1] ,
  \eeq 
and $\lambda >0$ is the disorder parameter.

\end{enumerate}

We set $\omega_S= \set{\omega_i}_{i\in S}$ for $S\subset \Z$, and denote the corresponding expectation and probability by $\E_S$ and $\P_S$. 
     (We will mostly  omit the subscript  and just write $\E$ and $ \P$  when the choice of $S$ is clear from the context.)

The  (infinite volume) Hamiltonian $H_\omega$ in \eq{infXXZ} can be rigorously defined on an appropriately defined Hilbert space, 
but in this work we   only  consider finite volume Hamiltonians,  since that is what is typically discussed in the physics literature.

 Given  a finite subset $\Lambda$ of $\Z$ ($\La$ will always denote a finite subset),  
 we consider the finite dimensional  Hilbert space  $\mathcal H_\Lambda=\otimes_{i\in \Lambda} \cH_i$.   If $A\subset \La$ and $T$ is an operator on $\cH_A$, we consider $T$ as an operator on $\mathcal H_\Lambda$ by identifying it with the operator $T\otimes I_{\cH_{\La \setminus A}}$ acting
on $\cH_\La=\cH_A \otimes\cH_{\La \setminus A}$. 
 (For a fixed $\La$ we will often omit $\La$ from the notation, e.g.,     $A^c=\La \setminus A$.) For $S\subset \Z$ we let $\abs{S}$ denote the cardinality of the set $S$.

Since
\beq
 \tfrac 14\pa{{I}-\sigma_i^z\sigma_{i+1}^z}=\tfrac 1 2\pa{\cN_i+\cN_{i+1}}-\cN_i\cN_{i+1},
\eeq
 we set
\beq\label{tildeh}
{h}_{i,i+1}= - \mathcal{N}_{i}\mathcal{N}_{i+1} -\tfrac{1}{2\Delta}\pa{\sigma_i^+\sigma_{i+1}^-+\sigma_i^-\sigma_{i+1}^+},
\eeq
a  self-adjoint operator ${h}_{i,i+1}$  on the four-dimensional Hilbert space  $\cH_{\set{i,i+1}}=\cH_i \otimes \cH_{i+1}$.
An explicit calculation shows  
 \beq\label{nth}
\norm{{h}_{i,i+1}}=1.
\eeq

We  can rewrite $H_0$ as
\beq\label{tildeH0}
 H_0= \sum_{i\in\Z} \pa{{h}_{i,i+1}+ \tfrac 1 2 (\mathcal{N}_{i} + \mathcal{N}_{i+1}) }=\sum_{i\in\Z} {h}_{i,i+1}+ \cN^\Z, \qtx{where} \cN^\Z=\sum_{i\in\Z} \cN_i,
 \eeq
which leads naturally to our definition of finite volume Hamiltonians.

\begin{definition}
The random XXZ quantum spin-$\frac 12$  chain  on a finite subset $\Lambda$ of $\Z$ is given by the self-adjoint  Hamiltonian
\be
H^\Lambda= H_0^\Lambda + \lambda V_\omega^\La \qtx{acting on} \cH_\La,
\ee
where
\be
H_0^\Lambda&=\sum_{\set{i,i+1}\subset \La} {h}_{i,i+1}+ \cN^\La,   \qtx{with} \cN^\La=\sum_{i\in\La} \cN_i \qtx{and}
 V_\omega^\La = \sum_{i\in\La} \omega_i \mathcal{N}_i.
\ee
We set  $R_z^\La= (H^\La -z)^{-1}$   for $z \not\in \sigma(H^\La)$, the resolvent of   $H^\La $.
\end{definition}

The  free Hamiltonian $H_0^\Lambda$ can be rewritten as
\beq\label{eq:modH}
H_0^\Lambda=-\tfrac{1}{2\Delta} \bD^\Lambda + \cW^\Lambda \qtx{on} \cH_\La,
\eeq
where 
\begin{align}\label{bD}
 \bD^\Lambda = \sum_{\set{i,i+1}\subset \Lambda} \pa{\sigma_i^+\sigma_{i+1}^-+\sigma_i^-\sigma_{i+1}^+} \qtx{and}
 \cW^\Lambda = \cN^\Lambda -\sum_{\set{i,i+1}\subset \Lambda} \cN_i\cN_{i+1} .
 \end{align}

 The canonical (orthonormal) basis $\Phi_\La $  for $\cH_\La$ is constructed as follows: 
 Let $\phi_\emptyset= \Omega_\Lambda=\otimes_{i \in \Lambda}\uparrow\rangle_i$  be the vacuum state. Then
 \be \label{eq:standbasis}
& \Phi_\La 
=\set{\phi_A=\pa{\prod_{i\in A}\sigma_i^-}\Omega_\Lambda:\ A\subset \Lambda} =\bigcup_{N=0}^{\abs{\La}}  \Phi_\La\up{N}, 
 \ee
 where $\Phi_\La\up{N}= \set{\phi_A:  A\subset \Lambda, \ \abs{A}=N}$.  We remark that $ \Phi_\Lambda\up{0}=\set{\Omega_\Lambda}$.

The total (spin-down) number operator $\cN^\La$ on $\La$ is diagonalized by the canonical basis:  $\cN^\La \phi_A= \abs{A}\phi_A $ for $A\subset \La$, and hence has 
 eigenvalues $0,1,2,\ldots, \abs{\La}$.  We set  $\cH_\Lambda\up{N}=\Ran\pa{\chi_N(\mathcal N^\Lambda)}$,  obtaining   the Hilbert space decomposition 
$ \cH_\La= \bigoplus_{N=0}^{\abs{\La}} \cH_\La\up{N}$. 
% We will use the notation $\chi^\La_{N}=\chi_{\set{N}}(\mathcal N^\Lambda)$. 

 The operators $\cW^\La$ and $V^\Lambda_\omega$  are  also diagonalized by the canonical basis, and hence the operators $\cN^\La$, $\cW^\La$, and $V^\Lambda_\omega$  commute.  $\cW^\La$ is the number of clusters operator:     $\cW^\La \phi_A=W_{A}\phi_A $ for $A\subset \La$, where $W_{A}$ is the number of connected components (clusters) of $A$ as a subset of $\La$, so 
 $\sigma\pa {\cW^\La}\subset \set { 0,1,2,\ldots, \abs{ \La}}$.  $V_\omega^\La$ is the random field: 
 $V_\omega^\La\phi_A=\omega_A \phi_A $ for $A\subset \La$, where $\omega_A=\sum_{i\in A}\omega_i$.

 The Hamiltonian   $H^\Lambda$  preserves the total particle number,  
\beq
[H^\Lambda,\cN^\Lambda]= -\tfrac{1}{2\Delta}[ \bD^\Lambda,\cN^\Lambda]  =0, 
\eeq
a  feature  that makes the XXZ model especially amenable to analysis.

It can be verified (e.g., \cite{EK22}), that 
\beq\label{H0W}
\tfd \cW^\Lambda\le H_{0}^\Lambda , { \qtx{so}} \tfd \cW^\Lambda\le H^\Lambda,
\eeq
and  the spectrum of $H^\Lambda$ is  of the form  
\beq
\sigma(H^\Lambda)=\set{0} \cup \pa{\left[1 -\tfrac 1 \Delta, \infty \right ) \cap  \sigma(H^\Lambda) }.
\eeq
Moreover, 
the lower bound in \eq{H0W}  suggests the introduction of the energy thresholds $k\tfd$, $ k=0, 1,2\ldots$.

\subsection{Localization as quasi-locality}
 Henceforth,  by a subset of $\Z$ we will always mean a finite subset and by  an interval in $\Z$  a connected nonempty subset of  $\Z$.    The observable $\mathcal O$ is  said to have  support in $A\subset \La$ (we write $\supp {\mathcal O}=A$) if $\mathcal O$ acts trivially on $\cH_{A^c}$, that is, $\mathcal O=\mathcal O_A\otimes {I}_{\mathcal H_{A^c}}$ where $\mathcal O_A$ is an observable on $A$. (We will  identify $\mathcal O$ with $\mathcal O_A$.)   Note that the support of an operator is not uniquely defined.

  Given $\emptyset \ne S\subset    \Z$, we  define the orthogonal projections  $P_\pm^{S}$ on  $\cH_S$ by 
 \begin{align}\label{P+-S}
P_+^{S}=\bigotimes_{i\in S}\pa{{I}_{\mathcal H_{i}}-\cN_i}= \chi_{\set{0}}\pa{\cN^{S}}  \qtx{and} 
P_-^{S}= {I}_{\mathcal H_{S}}-P_+^{S}=  \chi_{[1,\infty)}\pa{\cN^{S}}.
\end{align} 
$P_+^{S}$  is the orthogonal projection onto states with no particles in the set $S$;  $P_-^{S}$  is the orthogonal projection onto states with at least one particle in $S$. (Note that $P_-^{\set{i}}=\cN_i$ for $i\in  \Z$.)
We also set 
 \be \label{P+-empty}
 P_+^{\emptyset } ={I}_{\cH_S} \qtx{and } P_-^{\emptyset }=0.
 \ee

 Given  $J\subset \R$ measurable, $B(J) $ denotes the collection of Borel measurable functions that vanish outside $J$;  we set $B_1(J)= \set{f\in B(J) :  \sup \abs{f} \le 1}$.

In \cite{EK22} we interpreted localization for the  random XXZ quantum spin-$\frac 12$  chain  as a form of quasi-locality.  The following theorem follows immediately from  \cite[Theorem 2.4 and Corollary 2.6]{EK22}.

\begin{theorem}[Quasi-locality]\label{thmloc} 
 Fix  $\Delta_0>1$ and  $ \lambda_0 >0$.   Then for all   $R\ge 0$  there exist  constants $D_R,F_R, \wtilde \xi_R, \wtilde\theta_R>0$ (depending on $R$, $\Delta_0$, $ \lambda_0$) such that,
 for all $\Delta \ge \Delta_0$ and $\lambda \ge \lambda_0$ with  $\lambda \Delta^2\ge D_R$,  finite interval  $\Lambda\subset  \Z$, and  $A\subset B\subset \La$ with    $A$ connected in $\La$,  we have the following:
 
 \begin{enumerate}
\item   For all  $z\in \C$ with  $ \Rea z \le R\tfd$  we have
\be\label{eq:mainbnd5}
\E_\La\set{\norm{P_-^{A}R_z^\Lambda P_+^{B}}^{\frac 14}}\le F_R\abs{\Lambda}^{\wtilde\xi_R} \e^{-\wtilde\theta_R\dist_\La(A,B^c)}.
\ee

\item 
\be\label{eq:eigencor5} 
\E_{\Lambda}\pa{\sup_{{f\in B_1\pa{\left ( -\infty, R\nfd \right ]}}}\norm{P_-^{A}f(H^\Lambda) P_+^{B}}}\le  F_R\abs{\Lambda}^{\wtilde\xi_R} \e^{-\wtilde\theta_R\dist_\La(A,B^c)}.\ee  
\end{enumerate}
\end{theorem}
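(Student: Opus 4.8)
\emph{The reduction to \cite{EK22}.} The plan is to record that this is the specialization, to the energy range $\Rea z \le R\tfd$, of the quasi-locality estimates the authors already established: part~(i) is \cite[Theorem~2.4]{EK22} — the fractional-moment bound for $R_z^\Lambda$, with the same threshold $\lambda\Delta^2\ge D_R$, the same exponent $\tfrac14$, and $z$ restricted to $\Rea z\le R\tfd$ — and part~(ii) is \cite[Corollary~2.6]{EK22}. The only point worth spelling out is the passage (ii)$\Leftarrow$(i): it is the standard deduction of eigenprojection (``eigenfunction correlator'') bounds from fractional-moment resolvent bounds — one writes $f(H^\Lambda)$ through the resolvent and integrates, using $\norm{P_-^{A}f(H^\Lambda)P_+^{B}}\le 1$ as an a priori input — and the hypothesis $\supp f\subset(-\infty,R\tfd]$ is exactly what keeps the relevant spectral data inside the range in which (i) applies; the sharp cutoff at $R\tfd$ is harmless because the bound is required uniformly over the class $B_1\pa{(-\infty,R\tfd]}$, so it suffices to establish it for a convenient dense subclass and pass to the limit. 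Thus, modulo notation, nothing further is needed.

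\emph{How one would prove it without appealing to \cite{EK22}.} The route is the one carried out there. First, use $[H^\Lambda,\cN^\Lambda]=0$ to split $H^\Lambda$ as a direct sum over the particle-number sectors $\cH_\Lambda\up N$. Second, use the operator inequality $\tfd\cW^\Lambda\le H^\Lambda$ from \eqref{H0W}: a vector of energy $\le R\tfd$ carries on average at most $R$ connected clusters, so the configurations that matter are unions of $O(R)$ ``droplets'', and these (slow) droplets are the effective degrees of freedom. Third, at unit scale treat the hopping $-\tfrac1{2\Delta}\bD^\Lambda$ in \eqref{eq:modH} as a perturbation: transporting a particle a distance $d$ into a cluster-free region while keeping the energy $\le R\tfd$ requires $d$ off-diagonal steps of size $\tfrac1{2\Delta}$ against an $O(1)$ cluster-energy gap, which yields a deterministic Combes--Thomas-type bound $\norm{P_-^{A}R_z^\Lambda P_+^{B}}\lesssim (C/\Delta)^{\dist_\Lambda(A,B^c)}$ for $z$ away from $\sigma(H^\Lambda)$. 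Fourth — the substantive step — promote this to a bound valid for $z$ on or next to the spectrum by a multiscale / fractional-moment bootstrap: the absolutely continuous disorder $\set{\omega_i}$ supplies Wegner- and eigenvalue-variation-type estimates that eliminate small denominators with high probability, and one patches scale-$\ell$ resolvent estimates to scale $2\ell$ via resolvent identities and a covering of $\Lambda$, paying the factor $\abs{\Lambda}^{\wtilde\xi_R}$ for the union bound over the possible positions of a resonant droplet configuration.

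\emph{Main obstacle.} It lies entirely in the fourth step — the small-denominator / resonance problem. Unlike a one-particle Schr\"odinger operator, here the density of states grows polynomially in $\abs{\Lambda}$, with the power increasing with $R$, so the number of potentially resonant (nearly degenerate) droplet configurations grows with the volume; taming them is precisely the content of the bootstrap in \cite{EK22}, and it is exactly what forces the prefactor $\abs{\Lambda}^{\wtilde\xi_R}$ in \eqref{eq:mainbnd5}--\eqref{eq:eigencor5} to be polynomial in the volume rather than an $\abs{\Lambda}$-independent constant. By comparison, the sector decomposition, the energy--cluster bound, and the unit-scale Combes--Thomas suppression are routine.
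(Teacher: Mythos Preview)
Your proposal is correct and matches the paper's own treatment: the theorem is stated there as an immediate consequence of \cite[Theorem~2.4 and Corollary~2.6]{EK22}, with no independent proof given. The only nuance the paper records (in a remark following the statement) that you gloss over is that \cite[Theorem~2.4]{EK22} is actually stated for \emph{real} energies $E\le(k+\tfrac34)\tfd$ with $k\in\N^0$, so one must observe that its proof extends verbatim to complex $z$ with $\Rea z\le(k+\tfrac34)\tfd$ and then pick $k\in\N^0$ with $R\le k+\tfrac34$ to cover an arbitrary $R\ge 0$.
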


\begin{remark}
\cite[Theorem 2.4]{EK22}  is stated and proved  for $R=k+\tfrac 3 4$, where $k\in \N^0$,
 and  real energies  $E\le (k+\tfrac 3 4)\nfd$.  However,  the proof of \cite[Theorem 2.4]{EK22} is also valid for complex energies $z$ with  $ \Rea z  \le (k+\tfrac 3 4)\nfd$, with the same constants. Picking $k\in \N^0$ so that $R\le k+\tfrac 3 4$ yields the result stated above. (As an alternative,  the proof of \cite[Theorem 2.4]{EK22}    can be adapted for the case
$R=k + \beta $ with $\beta\in (0,1)$;
we fixed $\beta=\frac 34$ in \cite{EK22}   for simplicity.) 
\end{remark}

\begin{remark}\label{remconn} If $A$ is not connected in $\La$, the theorem still holds with \eq{eq:mainbnd5} replaced by
\be\label{eq:mainbndnot5}
\\E_\La\set{\norm{P_-^{A}R_z^\Lambda P_+^{B}}^{\frac 14}}\le  F_R  {\Upsilon}^\La_A \abs{\Lambda}^{\wtilde\xi_R} \e^{-\wtilde\theta_R\dist_\La(A,B^c)},
\ee 
where ${\Upsilon}^\La_A$ denotes the number of connected components of $A$ in $\La$. This follows from  \eq{eq:mainbnd5}  and 
\beq\label{Pconnectedcomp}
P_-^{A}=\sum_{j=1}^{{\Upsilon}^\La_A} P_+^{\bigcup_{i=i}^{j-1} A_i}P_-^{A_j},
\eeq
where
  $A_j$, $j=1,2,\ldots, {\Upsilon}^\La_A$,  are  the  connected components of $A$ in $\La$.

\end{remark}

\subsection{Slow propagation of information}

Our main result shows that localization  of the XXZ random spin chain in a fixed energy interval implies 
slow propagation of information in this interval.

In this article we will assume that $\Delta_0>9$ in Theorem~\ref{thmloc}. This is done   to simplify our analysis,  but  in fact the result holds for arbitrary $\Delta_0> 1$ with minor modifications of the proofs.\footnote{For $1<\Delta_0 \le  9$ we need to improve the decay rate $m_0$ in \eq{CT}, which is derived from the lower bound in \eq{eq:hatH1'}. If $1<\delta_0<\Delta_0 \le  9$, we would have to  replace $\what H_k^{ \Lambda}$ in the proof by  $\what H_{k+r}^{ \Lambda}$, where $r=  \cl{\frac 1 {\delta_0-1}-\frac 1 8}$,  leading to $m_0= \ln \pa{(r + \frac 1 8) \pa{\delta_0-1}}>0 $.}

Given  $ q \in \tfrac 12 \Z$, we  consider the  energy intervals  
 \be\label{Ikle2}
I_{\le q}&= \left(-\infty, (q+\tfrac 3 4)\tfd\right] ,\quad 
I_q= \left[\sfd, (q+\tfrac 3 4)\tfd\right],\\
\check I_{\le q}&=\left(-\infty, (q+\tfrac 7 8)\tfd\right] ,\quad 
\check I_q= \left[\sfd, (q+\tfrac 78)\tfd\right].
\ee 
These intervals are increasing with $q$.    We also note the relation    
\be\label{indq}
  \check I_{\le q-1} \subset  I_{\le q-\frac 1 2}   \subset \check I_{\le q-\frac 1 2}\qtx{for}  q \in \tfrac 12 \N .
\ee

 Let  $\La$  be a finite subset of $\Z$. Given an interval $I\subset \R$, we set   $P_{ I}=P^\La_{ I}=\chi_{ I}(H^\La) $. If $T$ and $Y$ are  observables on $\cH_\La$,  we define
\beq
\pa{T}_{Y}= YT  Y^*.
\eeq
We also consider the  Heisenberg time evolution of observables:
\be
\tau_t^\La(T)= \e^{itH^\La}T\e^{-itH^\La}\qtx{for} t\in \R.
\ee

  Given  $ M\subset \Lambda\subset \Z$,   we let
   \be
{[M]^\La_s } &:=\begin{cases}\set{x\in\La: \dist_\La\pa{x,M}\le s} &\sqtx{if}  s\in \N^0=\set{0} \cup \N  \\ \set{x\in\La:\dist_\La\pa{x,  M^c}\ge 1-s}= M\setminus [M^c]^\La_{-s}&\sqtx{if}  s\in-\N
\end{cases}.
\ee
If $M=\set{j}$ we write ${[j]^\La_q }={[\set{j}]^\La_q }$.

\begin{definition} Given $q\in \frac 12 \N^0$,  we say that Condition $\cL_q$ is satisfied  if
the  parameters $\Delta$ and $\lambda$ satisfy the hypotheses of Theorem~\ref{thmloc}  for $R= q+\frac 78$,
so the conclusions of the theorem are valid for $R= q+\frac 78$, that is,  on the energy interval $\check I_{\le q}$.  
We set $\xi_q=\wtilde \xi_{ q+\frac 78}$ and $\theta_q=\wtilde \theta_{q+\frac 78}$.
 \end{definition}

We use the notation $\langle t\rangle= \pa{1 + t^2}^{\frac 12}$ for $t\in \R$.  For $ q\in  \tfrac 12\N^0$ we set $\hat q=\cl{q}$ and  define $\beta_q$ recursively by 
\be\label{betaq}  
\beta_0=0,\quad \beta_q=\beta_{q-\frac 12} +9\hat q + 13.
\ee 
Note that  $ 9 q^2 + \frac {61}2 q  \le \beta_q\le 9 q^2 +  \frac {79}2 q $.

\begin{theorem}[ Slow propagation of information]\label{thm:localmodell}
 Let   $q\in \frac 1 2 \N$,  and assume  Condition $\cL_{q+\frac 12}$ is satisfied.  Then there exists a constant $C_q$  such that for any given finite interval $\La \subset \Z$,  scale $\ell\in \N$, and all $t\in\R$,  the following holds:
For  every  observable  $T$   supported on an interval $\mathcal X\subset \La$ with $\norm{T}\le 1$
there exists an observable ${T}_t={T}(t,q,\ell,\La)$, supported in  $ [\cX]^\La_{(13+\beta_{q+\frac 12} )\ell}$,   such that
\be\label{eq:locality2mgq}
\E\norm{\pa{\tau^\La_t(T)-{T}_t}_{P^\La_{I_{\le q}}}}\le   C_{q}\langle t\rangle^{2q+4} \abs{\Lambda}^{{\xi_{q+\frac 12} }}\e^{- \theta_{q+\frac 12}\, \ell}.
\ee
 \end{theorem}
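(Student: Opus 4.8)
The plan is to establish a deterministic ``one-step'' quasi-locality estimate, valid on each interval $\check I_{\le q'}$ for which Condition $\cL_{q'}$ holds, and then iterate it $\ell$ times, peeling off one particle sector at a time, so that the decay rate $\theta$ is gained once per step at the cost of enlarging the support by a bounded amount and picking up one extra power of $\langle t\rangle$ and of $\abs\Lambda$. The structure is dictated by the spectral decomposition $\cH_\La=\bigoplus_N \cH_\La\up{N}$ together with the threshold structure \eqref{H0W}: on the energy window $I_{\le q}$, the relevant states have a bounded number of ``clusters'', and $P_-^{A}R_z^\La P_+^{B}$ is small by Theorem~\ref{thmloc}. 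The first step is to write $\tau^\La_t(T)-{T}_t$ in a way that exposes factors of the form $P_-^{A}(\,\cdot\,)P_+^{B}$. Concretely, for an observable $T$ supported on $\cX$, the ``naive'' localization ${T}_t^{(0)}$ is obtained by conditioning $\tau^\La_t(T)$ on the event that no particle sits far outside $\cX$; the error is controlled by commuting $P_+$/$P_-$ projections past $\e^{-itH^\La}$ and invoking \eqref{eq:eigencor5} (for the propagator one uses the Cauchy transform / Helffer--Sjöstrand representation, or a contour integral of the resolvent, to convert \eqref{eq:mainbnd5} into a bound on $P_-^{A}\e^{-itH^\La}g(H^\La)P_+^{B}$ with polynomial-in-$t$ loss). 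This is where the factor $\langle t\rangle^{2q+4}$ and the volume power $\abs\Lambda^{\xi_{q+1/2}}$ enter: each application of the resolvent bound along the contour, together with the number of particle sectors $\lesssim q$ that must be handled, produces a fixed polynomial loss.

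The second step is the iteration. Having localized $T$ to a slightly larger interval with one factor of $\e^{-\theta\ell_1}$ (for a single sub-scale $\ell_1$), one repeats the argument on the new observable; because the energy window $I_{\le q}$ is fixed and $\cL_{q+\frac12}$ gives room (the relation \eqref{indq} shows $\check I_{\le q-1}\subset I_{\le q-1/2}\subset\check I_{\le q-1/2}$, so there is a ``buffer'' half-integer of energy to spend), each step stays within the regime where Theorem~\ref{thmloc} applies. Splitting $\ell$ into a bounded number of blocks — the recursion \eqref{betaq} for $\beta_q$ is precisely the bookkeeping of how the support grows and how many blocks are needed, giving the final radius $(13+\beta_{q+1/2})\ell$ — and using subadditivity of the expectation of the operator norm over the blocks yields \eqref{eq:locality2mgq}. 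One must be careful that ${T}_t$ remains a genuine observable supported on the claimed interval: this is arranged by \emph{defining} ${T}_t$ as the composition of the conditional-expectation-type maps (partial traces against $P_+$ on the complement) applied to $\tau^\La_t(T)$, which manifestly produces an operator supported where required, with norm $\le\norm{T}\le1$.

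The main obstacle I expect is converting the resolvent estimate \eqref{eq:mainbnd5} into a usable bound on the \emph{time evolution} $\tau^\La_t$ restricted to the energy window, with explicit and only polynomial dependence on $t$ and $\abs\Lambda$. The quarter-power in \eqref{eq:mainbnd5} and the expectation make this delicate: one cannot simply integrate the resolvent bound along a contour inside the expectation without losing control, so one needs a careful choice of contour (staying a fixed distance from $\sigma(H^\La)\cap I_{\le q}$, using that the spectrum has a gap below $\fd$), together with Cauchy--Schwarz or Hölder to handle the fractional moment, and a deterministic a priori bound $\norm{P_-^A R_z^\La P_+^B}\le \abs{\Ima z}^{-1}$ to interpolate. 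A secondary technical point is ensuring the geometry works when $\cX$ or the intermediate sets are near $\partial\La$, which is why the sets $[M]^\La_s$ were defined with both positive and negative $s$; one handles boundary effects by the $M\setminus[M^c]^\La_{-s}$ alternative. Finally, propagating the ``number of clusters $\le$ something'' information through the iteration — needed because Theorem~\ref{thmloc} is cleanest for connected $A$, with the correction \eqref{eq:mainbndnot5} otherwise — requires tracking that each localization step increases the cluster count only by a bounded amount, contributing to the constants absorbed in $C_q$.
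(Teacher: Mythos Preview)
Your proposal misses the core inductive mechanism. The recursion \eqref{betaq} is an induction in the \emph{energy threshold} $q$ (in half-integer steps), not a geometric iteration in $\ell$; the exponential factor $\e^{-\theta_{q+\frac12}\ell}$ is obtained in a \emph{single} application of Theorem~\ref{thmloc}, not by accumulating decay over $\ell$ repeated localization steps. Consequently ``iterate $\ell$ times'' and ``split $\ell$ into a bounded number of blocks'' is not the right picture, and a scheme based on it would not reproduce the stated constants or support radius.

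The two structural ideas you are missing are: (i) a \emph{decoupling} step, where one replaces $H^\La$ by $H^{A,A^c}=H^A+H^{A^c}$ on a fattening $A$ of $\cX$, the error being controlled via a Helffer--Sj\"ostrand representation together with \eqref{eq:mainbnd5} (this is what produces the $\langle t\rangle^3$ factor and one application of the quasi-locality bound); and (ii) an \emph{energy-reduction} step: once decoupled, the presence of at least one particle in $A$ (arranged by the decomposition of Lemma~\ref{lem:Mmod}) forces the $A^c$-energy to lie in $\check I_{\le q-1}\subset I_{\le q-\frac12}$, so the evolution $\tau_t^{A^c}(P_+^{\partial^{out}_{3\ell}A})$ can be handled by the result at level $q-\frac12$. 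The coupling between levels is made through Duhamel: $\tau_t^{A^c}(P_+)-P_+=\int_0^t\tau_s^{A^c}(D)\,ds$ with $D=i[H^{A^c},P_+]$, and the key observation is that $D$ has exactly the special ``buffer'' form $P_+^{[\cY]_{2\ell-1}}P_-^{\partial[\cY]_{2\ell}}(\cdot)P_-^{\partial[\cY]_{2\ell}}P_+^{[\cY]_{2\ell-1}}$ for which a strengthened inductive hypothesis (Part~(ii) of the auxiliary lemma) guarantees that the approximant can be chosen with an additional $P_+^{\cY}$ sandwich; this last point is what allows the pieces on $A$ and $A^c$ to be reassembled into a genuine local observable. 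Your partial-trace definition of $T_t$ does not provide this, and without the $P_-^A$ factor you cannot drop the $A^c$-energy window, so the induction does not close.
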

 
 This theorem  is proved  in Section~\ref{secprop}.

As  discussed in the introduction, we  can improve the dependence on the volume by considering matrix elements instead of the norm.  The constant  $c_\mu>0$ in the estimate comes from   the large deviation estimate \cite[Eq. (3.50]{EK22}) and depends only on the probability distribution $\mu$.
 
 \begin{corollary}[Slow propagation of information, matrix elements version]\label{cor:localmodell}
 Let   $q\in \frac 1 2 \N$,  and assume  Condition $\cL_{q+1}$ is satisfied. 
There exist constants $C_q$ and $Y_q$  such that given a finite interval $\La \subset \Z$,  scale $\ell\in \N$,  $t\in\R$,  the following holds:  Given subsets $M_i\subset \La$, $i=1,2$, with $\abs{M_1}=\abs{M_2}$, then 
for  every  observable  $T$   supported on an interval $\mathcal X\subset \La$, with $\norm{T}\le 1$
and $\abs{\cX} \le   \ln \La$,
there exists an observable  ${T}_t={T}(t,q,\ell,\La,  M_1\cup M_2)$, supported on   $ [\cX]^\La_{(13+\beta_{q+1} )\ell}$,   such that
\be\label{eq:locality2mgqcor}
\E\norm{\pi_{M_1}\pa{\tau^\La_t(T)-{T}_t}_{P^\La_{I_{\le q}}}\pi_{M_2}} \le C_{q}\langle t\rangle^{2q+5} \pa{\ln \abs{\La}}^{\xi_{q+1} }\e^{-\frac 12 \min\set{ \theta_q,\theta_{q+\frac 1 2}, c_\mu}\ell},
\ee
provided $ \abs{\La} \ge Y_q$.
 \end{corollary}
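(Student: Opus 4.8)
The first move is to get the support and the time weight right for free: apply Theorem~\ref{thm:localmodell} with the parameter $q+\tfrac12$ in place of $q$. This requires exactly Condition $\cL_{(q+\frac12)+\frac12}=\cL_{q+1}$, it produces an observable supported in $[\cX]^\La_{(13+\beta_{(q+\frac12)+\frac12})\ell}=[\cX]^\La_{(13+\beta_{q+1})\ell}$, and it comes with the weight $\langle t\rangle^{2(q+\frac12)+4}=\langle t\rangle^{2q+5}$ — all matching the corollary. Since $I_{\le q}\subset I_{\le q+\frac12}$ we have $\norm{(X)_{P^\La_{I_{\le q}}}}\le\norm{(X)_{P^\La_{I_{\le q+\frac12}}}}$ for any $X$, so this candidate $T_t$ already satisfies
\[
\E\norm{\pi_{M_1}\pa{\tau^\La_t(T)-T_t}_{P^\La_{I_{\le q}}}\pi_{M_2}}\le \E\norm{\pa{\tau^\La_t(T)-T_t}_{P^\La_{I_{\le q+\frac12}}}}\le C_q\langle t\rangle^{2q+5}\abs{\La}^{\xi_{q+1}}\e^{-\theta_{q+1}\ell}.
\]
The only thing left is to trade the prefactor $\abs{\La}^{\xi_{q+1}}$ for $(\ln\abs{\La})^{\xi_{q+1}}$, at the cost of the extra room $I_{\le q}\subset I_{\le q+\frac12}$, a weaker rate, the hypotheses $\abs{\cX}\le\ln\abs{\La}$, $\abs{\La}\ge Y_q$, and possibly a finer, $M_1\cup M_2$-dependent choice of $T_t$ inside the same support window.

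I would split according to the size of $\ell$. Set $m=\min\set{\theta_q,\theta_{q+\frac12},c_\mu}$ and pick a constant $K=K_q$ with $K(\theta_{q+1}-\tfrac12 m)\ge \xi_{q+1}$ (this uses that the decay rates do not collapse across one unit of the index, so $\theta_{q+1}>\tfrac12 m$; it can be read off the explicit rates of \cite{EK22}). If $\ell\ge K\ln\abs{\La}$ then $\abs{\La}^{\xi_{q+1}}\e^{-\theta_{q+1}\ell}\le \e^{-\frac12 m\ell}\le (\ln\abs{\La})^{\xi_{q+1}}\e^{-\frac12 m\ell}$, the last step valid once $\abs{\La}\ge Y_q$ so that $(\ln\abs{\La})^{\xi_{q+1}}\ge 1$; thus the candidate $T_t$ above works as is, with no reference to $M_1,M_2$. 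So the content of the corollary lives in the complementary regime $\ell<K\ln\abs{\La}$, where, because $\abs{\cX}\le\ln\abs{\La}$, the support window $\cY:=[\cX]^\La_{(13+\beta_{q+1})\ell}$ is an interval of length $O_q(\ln\abs{\La})$.

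In that regime the plan is to rerun the proof of Theorem~\ref{thm:localmodell} with the two canonical basis vectors $\phi_{M_1},\phi_{M_2}$ frozen, i.e.\ to estimate $\E\abs{\scal{\phi_{M_1},P^\La_{I_{\le q}}\pa{\tau^\La_t(T)-T_t}P^\La_{I_{\le q}}\,\phi_{M_2}}}$ directly. The telescoping construction of $T_t$ peels off shells of width $\sim\ell$ and controls the errors via the quasi-locality bound \eqref{eq:eigencor5}; there the volume factor is the product of the $\abs{\La}^{\wtilde\xi}$ in \eqref{eq:eigencor5} and a union bound over the $O(\abs{\La})$ shell positions. With the endpoints fixed both sources shrink: (i) only the $O(\ell)=O(\ln\abs{\La})$ shells separating $\cX\cup M_1\cup M_2$ from its complement are relevant (when some $M_i$ is farther from $\cX$ than $\cY$, the corresponding matrix element is even smaller, so one may treat that case separately and the window is still $O(\ln\abs{\La})$); and (ii) in place of the already-summed bound \eqref{eq:eigencor5} one reopens its derivation in \cite{EK22} — a resolvent/Combes--Thomas expansion whose terms carry $\omega$-dependent denominators estimated by the large-deviation bound \cite[Eq.~(3.50)]{EK22} — and observes that for a fixed matrix element only the configurations $\phi_C$ reachable from the frozen endpoints inside a window of radius $\sim\ell$ around $\cX\cup M_1\cup M_2$ contribute non-negligibly (whence the dependence of $T_t$ on $M_1\cup M_2$); there are only $\mathrm{poly}_q(\ell)=\mathrm{polylog}(\abs{\La})$ of those, the relevant diameters are $O(\ln\abs{\La})$ rather than $\abs{\La}$ (turning $\abs{\La}^{\wtilde\xi}$ into $(\ln\abs{\La})^{\wtilde\xi}$), and the tail of configurations leaving that window is bounded crudely by $\e^{-c_\mu\ell}$. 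Assembling these pieces replaces $\abs{\La}^{\xi_{q+1}}$ by $(\ln\abs{\La})^{\xi_{q+1}}$, pushes the rate down to $\tfrac12\min\set{\theta_q,\theta_{q+\frac12},c_\mu}$, and costs one extra power of $\langle t\rangle$ from bookkeeping the time dependence of the new error terms, with $\abs{\La}\ge Y_q$ making the constants and the comparison $(\ln\abs{\La})^{\xi_{q+1}}\ge1$ uniform.

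The main obstacle is step (ii): rather than using Theorem~\ref{thmloc} / \eqref{eq:eigencor5} as a black box, one has to go back inside its proof in \cite{EK22}, check that the resolvent expansion there restricts (for a fixed pair of endpoints) to polylogarithmically many $\omega$-configurations, and verify that the large-deviation estimate \cite[Eq.~(3.50)]{EK22} controls precisely the denominators that arise. Everything else — the reduction to matrix elements, the two-regime split, the treatment of far-away $M_i$, and propagating the extra $\langle t\rangle$ through the telescoping — is routine once this combinatorial/probabilistic input is secured.
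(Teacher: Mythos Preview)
Your large-$\ell$ regime is fine (though the paper just applies Theorem~\ref{thm:localmodell} at level $q$, not $q+\tfrac12$, so it only needs $\alpha\ge 2\xi_{q+\frac12}/\theta_{q+\frac12}$ and avoids your extraneous hypothesis $\theta_{q+1}>\tfrac12 m$). The real gap is in the small-$\ell$ regime: your plan to ``reopen'' the proof of Theorem~\ref{thmloc} in \cite{EK22} and argue that only polylogarithmically many $\omega$-configurations contribute to a fixed matrix element is speculative---you yourself flag it as the main obstacle---and it is not what the paper does.

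The paper's argument exploits one thing you never use: $\pi_{M_i}$ fixes the \emph{particle number} at $\hat N:=\abs{M_1}=\abs{M_2}$. Set $r=\lceil\alpha\ln\abs{\La}\rceil$ with $\alpha$ also satisfying $\alpha\ge 4\lceil q\rceil/c_\mu+1$, and split on $\hat N$. If $\hat N\ge r$, the large deviation estimate \cite[Eq.~(3.53)]{EK22} gives $\P\{\chi_\La^{(\hat N)}P^\La_{I_{\le q}}\ne 0\}\le C_q\abs{\La}^{2\lceil q\rceil}\e^{-c_\mu\hat N}\le C_q\e^{-\frac12 c_\mu r}\le C_q\e^{-\frac12 c_\mu\ell}$, so one simply takes $T_t=T$; this is where $c_\mu$ actually enters the bound. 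If $\hat N<r$, then $\abs{M_1\cup M_2}<2r$ and $\abs{\cX}<r$, so by pigeonhole there is $j\le 2r$ with $\big([\cX]^\La_{(2j+2)r}\setminus[\cX]^\La_{2jr}\big)\cap(M_1\cup M_2)=\emptyset$. Put $\cX_r:=[\cX]^\La_{(2j+1)r}$, an interval of size $O(r^2)=O_q((\ln\abs{\La})^2)$. The gap forces $\pi_{M_i}=\pi_{M_i}P_+^{\partial^\La_r\cX_r}$, so Lemmas~\ref{lemHS}--\ref{lem:detachaA} decouple $\cX_r$ from $\cX_r^c$ at cost $C_q\langle t\rangle^3\abs{\La}^{\xi_q}\e^{-\theta_q r}\le C_q\langle t\rangle^3\e^{-\frac12\theta_q r}$. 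Finally apply Theorem~\ref{thm:localmodell} at level $q+\tfrac12$ \emph{inside the small box} $\cX_r$, where the volume prefactor becomes $\abs{\cX_r}^{\xi_{q+1}}=\mathrm{polylog}\,\abs{\La}$. No reopening of \cite{EK22} is needed; the whole reduction is geometric and uses the existing lemmas as black boxes.
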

 
 For fixed   $M_i\subset \La$, $i=1,2$, with $\abs{M_1}=\abs{M_2}$, the bound \eq{eq:locality2mgqcor}  improves on the dependence on $\abs{\La}$ in  \eq{eq:locality2mgq}, but the observable ${T}_t$ in \eq{eq:locality2mgqcor}  a-priori depends on  $M_1 \cup M_2$.
 Note also that if $\abs{M_1}\ne\abs{M_2}$ the left hand side of \eq{eq:locality2mgqcor}  equals $0$.

  The proof of this corollary is given  in Section~\ref {secpropmatrix}.

\section{Key  proof ingredients}\label{sec:ir'}

In this section we collect a number of definitions, statements and   lemmas that,  in conjunction  with Theorem \ref{thmloc}, will facilitate the proof of Theorem \ref{thm:localmodell}.  

\subsection{Preliminaries}
$\La$ will always denote a finite subset of $\Z$ and $A\subset \La$ will always denote an interval.

 Given  $ M\subset \Lambda$ and  $s\in \N$,  we set
\be
\partial_{s}^{\La, out} M&:=\set{x\in\Lambda:\ \dist_\La\pa{x,M}=s}= [M]^\La_s\setminus M,\\
 \partial_{s}^{\La,in} M&:=\set{x\in\Lambda:\ \dist_\La \pa{x, M^c}=s}=M\setminus [M]^\La_{-s},\\
 \partial_s^\La M &:=  \partial_{s}^{\La, in} M \cup  \partial_{ex}^{\La, out}M=[M]^\La_s \setminus  [M]^\La_{-s},\\
 \pmb{\partial}^\La M &:=  \set{\set{x,y}\subset \La: \  (x,y) \in \pa{\partial_{1}^{\La,in} M \times \partial_{1}^{\La,out}M } \cup \pa{\partial_{1}^{\La,out} M \times \partial_{1}^{\La,in}M }}.
 \ee
If $s=1$, we occasionally omit it from the notation altogether.

 Given $ B\subset \N^0$, we set $Q_B^\Lambda=\chi_{B}\pa{\cW^\Lambda}$,   $Q_m^\Lambda= Q_{\set{m}}^\Lambda$ for $m \in \N^0$, 
   and note that $Q_0^\Lambda=  P_+^\Lambda$ and  $Q_\N^\Lambda= \chi_{\N}(\cN^\Lambda)$.  For $k\in \N$, we set 
\be\label{QkhatQ}
Q_{\le k}^\Lambda   =Q_{\set{1,2,\ldots,k}}^\Lambda =\sum_{ m=1}^k Q_m^\Lambda \qtx{and} 
\what Q_{\le k}^\Lambda   =Q_{\le k}^\Lambda + \tfrac {k+1} k Q_0^\Lambda.
\ee
We also set   $Q^\La_{>k }= I - \what  Q^\Lambda_{\le k}=\chi_{[k+1, \Lambda]}\pa{\cW^\La}$.  For 
 $k\in \N$ we have (see  \cite[Lemma~3.5]{EK22})
\begin{align}\label{trXk}
\tr {Q_{\le k}^{\Lambda}}&\le {k} \abs{\Lambda}^{2k}  \qtx{and}   \tr \chi_{\check  I_{\le k}}(H^\La)\le  k\abs{\Lambda}^{2k}+1.
\end{align}

We also set  
\be \label{eq:compH'}
\what  H_0^{ \Lambda}&=H^{\Lambda}+\tfd Q_0^\Lambda,\\
\what  H_k^{ \Lambda}&=H^{\Lambda}+{k}\tfd \what  Q_{\le k}^{\Lambda} \qtx{for}  k\in \N.
\ee 
We use the notation
\be
\what  R^{\Lambda}_{k,z}&= \pa{\what  H_k^\La  -z}^{-1}  \mqtx{for} z\notin \sigma(\what  H_k^\La) \qtx{for}   k\in \N^0,
\ee 
and recall the resolvent identity
 \be\label{eq:resmodl}
R_{z}^{\Lambda}=\what  R^{\Lambda}_{k,z}+ k\tfd R^{\Lambda}_{z} \what Q^\Lambda_{\le k}\what  R^{\Lambda}_{k,z}=\what  R^{\Lambda}_{k,z}+ k\tfd\what  R^{\Lambda}_{k,z}\what Q^\Lambda_{\le k}  R^{\Lambda}_{z}.
\ee

 It follows from \eqref{H0W} and \eq{Ikle2} that   for $k\in \N^0$ we have
 \be \label{eq:hatH1'}
\what  H_k^{ \Lambda}\ge  \pa{k+1} \tfd {I}   \qtx{and}  \pa{\what  H_k^{ \Lambda}-E}  \ge \tfrac 1{8}\tfd  {I}\mqtx{for} E \in \check I_{\le k}
\ee
and
\be
\norm{\what  R^{\Lambda}_{k,z}} \le \norm{\what  R^{\Lambda}_{k,\Rea z}} \le 8 {\tfd}^{-1} \qtx{for}  \Rea z \in \check I_{\le k}.
\ee

For $q\in \frac 12 \N^0$, we set 
\be
\what  H_q^{ \Lambda}= \what  H_{\what q}^{ \Lambda} \qtx{and } \what  R^{\Lambda}_{q,z}=\what  R^{\Lambda}_{\what q,z}.
\ee

\subsection{Quasi-locality (deterministic)}

\begin{lemma}[{\cite[Lemma~3.1]{EK22}}] \label{lem:locdet'} 
Let $T$ be an operator on the Hilbert space $\cH_{\Lambda}$,  and let $Y$ be  a projection  on  $\cH_{\Lambda}$ such  that  $[Y,T]=0$ and $[Y, P_\pm^K]=0 $ for all $K\subset \La$.

Suppose
\begin{enumerate}

\item  For all   $K \subset \La $   we have   $[P_-^{K},T]P_+^{[K]_1^{\La}}=0$.

 \item For all  connected  $K \subset\La $  we have  $\norm{[P_-^{K},T]}\le  \gamma  $. 
 
 \item  $T_Y$, the restriction of the operator $T$ to $\Ran Y$, is invertible with
$ \norm{T_Y^{-1}}_{\Ran Y} \le \eta^{-1}$,  where $\eta>0$.

 \end{enumerate}

Then for all $A\subset B\subset \La$, 
we have
 \be \label{eq:localitybndet}
\norm{P_-^{A}\,T_Y^{-1}\,P_+^{B}}_{\Ran Y}\le  \gamma^{-1} \e^{-m  \dist_\La  (A,B^c)}= \eta^{-1}  \e^{-m ( \dist_\La  (A,B^c)-1)}, \sqtx{with} 
m=\ln\pa{{\gamma^{-1}}\eta }.
\ee 
\end{lemma}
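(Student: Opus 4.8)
The plan is a telescoping iteration built from two facts. First, the inverse–commutator identity $[T_Y^{-1},P]=T_Y^{-1}[P,T]\,T_Y^{-1}$ holds on $\Ran Y$: since $[Y,T]=0$ and $[Y,P_\pm^{K}]=0$ for every $K$, all operators in sight preserve $\Ran Y$, on which $T_Y^{-1}$ is a genuine two-sided inverse, so it reduces to the elementary identity $T^{-1}(PT-TP)T^{-1}=T^{-1}P-PT^{-1}$. Second, I would use hypothesis (1) in the form $[P_-^{K},T]=[P_-^{K},T]\,P_-^{[K]_1^\La}$ for every $K\subset\La$ (insert $I=P_-^{[K]_1^\La}+P_+^{[K]_1^\La}$ and drop the $P_+$ part). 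I also use the elementary orthogonality $P_-^{K}P_+^{L}=0$ whenever $K\subseteq L$ — no particle in $L$ forbids one in $K\subseteq L$ — so in particular $P_-^{A}P_+^{B}=0$ since $A\subseteq B$.

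Working on $\Ran Y$ and writing $S:=T_Y^{-1}$, the basic step is: for every $K\subseteq B$,
\[
P_-^{K}\,S\,P_+^{B}=-\,S\,[P_-^{K},T]\,P_-^{[K]_1^\La}\,S\,P_+^{B}.
\]
Indeed $P_-^{K}S=SP_-^{K}-S[P_-^{K},T]S$ by the inverse–commutator identity; right-multiplying by $P_+^{B}$ kills the term $SP_-^{K}P_+^{B}=0$ (this is exactly where $K\subseteq B$ enters), and hypothesis (1) converts the remaining trailing $S$ into $P_-^{[K]_1^\La}S$.

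Then I would iterate. Set $d:=\dist_\La(A,B^c)$ and $A_j:=[A]_j^\La$, so that $A_0=A$, $[A_j]_1^\La=A_{j+1}$, and $A_j\subseteq B$ precisely when $j\le d-1$. For $d\ge2$, apply the basic step with $K=A_0$, then inside the factor $P_-^{A_1}SP_+^{B}$ it produces with $K=A_1$, and so on; the step using $K=A_i$ is legitimate because $A_i\subseteq B$ for $i\le d-2$. Stopping after $d-1$ steps yields
\[
P_-^{A}\,S\,P_+^{B}=(-1)^{d-1}\Big(\textstyle\prod_{j=0}^{d-2}S\,[P_-^{A_j},T]\Big)\,P_-^{A_{d-1}}\,S\,P_+^{B}.
\]
Each of the $d-1$ factors $S[P_-^{A_j},T]$ has norm $\le\norm{S}\,\norm{[P_-^{A_j},T]}\le\eta^{-1}\gamma$, using hypothesis (3) for $\norm{S}$ and hypothesis (2) for $\norm{[P_-^{A_j},T]}$ — valid because $A_j=[A]_j^\La$ is an interval, hence connected, whenever $A$ is an interval in the interval $\La\subseteq\Z$ (for disconnected $A$ one first expands $P_-^{A}$ over connected components as in \eqref{Pconnectedcomp}, incurring a combinatorial prefactor). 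The trailing factor has norm $\le\norm{S}\le\eta^{-1}$. Hence
\[
\norm{P_-^{A}\,S\,P_+^{B}}\le(\eta^{-1}\gamma)^{d-1}\eta^{-1}=\eta^{-1}\e^{-m(d-1)}=\gamma^{-1}\e^{-md},\qquad m=\ln(\gamma^{-1}\eta),
\]
and the cases $d\le1$ are the trivial bound $\norm{P_-^{A}SP_+^{B}}\le\norm{S}\le\eta^{-1}$. This is the claim.

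The step I expect to need the most care is the \emph{stopping rule}: one must unfold exactly $d-1$ times. Running the iteration one step too far costs an extra factor $\eta^{-1}\gamma$ and hence the wrong constant/exponent, while the precise licence to unfold at stage $i$ is the containment $A_i\subseteq B$, which first fails at $i=d$; so the whole argument amounts to careful bookkeeping of when $P_-^{A_i}P_+^{B}=0$ may be invoked. A secondary point is to ensure that the sets $A_i$ in the chain are connected so that hypothesis (2) genuinely applies — automatic for $A$ connected in $\La$. Beyond this, the proof is one commutator line repeated $d-1$ times.
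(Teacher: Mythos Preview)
Your argument is correct and is precisely the standard iterated-commutator (Combes--Thomas style) argument that underlies \cite[Lemma~3.1]{EK22}, which the present paper merely cites without reproducing the proof. The connectedness caveat you flag is exactly right: the lemma is only invoked in this paper with $A$ connected in $\La$ (cf.\ Theorem~\ref{thmloc} and Remark~\ref{remconn}), so your reduction of the general case via \eqref{Pconnectedcomp} is the appropriate footnote rather than a gap.
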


This lemma yields quasi-locality for the resolvent of the operators $H^\La$ and $\what  H_q^{ \Lambda}$, as discussed in  \cite[Section~3.2]{EK22}.  The operator $ T= \what  H_q^{ \Lambda}-z$ satisfies the hypotheses of Lemma~\ref{lem:locdet'}  for  $q\in \frac 12 \N$ and $  \Rea z \in \check I_{\le \what q}$, with 
  $\gamma = \frac 1 {\Delta}\le  \frac 1 {\Delta_0}$,  $Y={I}_{\cH_\La}$,  and  
  $\eta= \dist (z,\sigma(\what  H_q^{ \Lambda}))\ge \frac 1{8}\tfd\ge \frac 1{8}\pa{1-\frac 1 {\Delta_0}}$, and hence   for 
 $A\subset B\subset \La$    the estimate  \eq{eq:localitybndet} yields  (recall  we assumed $\Delta_0 > 9$)
 \be  \label{CT}
\norm{P_-^{A}\what  R^{\Lambda}_{q,z}P_+^{B}}\le \tfrac 1{\Delta_0}  \e^{-m_0 \dist_\La  (A,B^c)},
 \mqtx{where} m_0= \ln  \tfrac {\Delta_0-1} 8>0.
\ee

\subsection{Consequences of quasi-locality}

\begin{lemma}\label{lem:detbnP}
Fix  $k\in\N$.   Given   a collection   $\set{S_i}_{i=1}^{k+1}$ of  nonempty subsets   of  $\Lambda$ with
\beq\label{dSiSj}
\min_{i\neq j }\dist_\La \pa{S_i,S_j}\ge 2\ell +1, \qtx{where}\ell \in \N,
\eeq 
 we have
\be\label{PQPLD}
\E \Big\|{P_{ I_{\le  k}}\prod_{i=1}^{k+1} P_-^{S_i}}\Big\|\le  C_k \Upsilon^\La_{max} \abs{\La}^{2k+1} \e^{-m_0\ell},
\ee
 where $\Upsilon^\La_{max} =\max_{i=1}^{k+1} {\Upsilon}^\La_{{S_i}}$ (see Remark~\ref{remconn}).
\end{lemma}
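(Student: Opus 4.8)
The plan is to combine the rank bound \eqref{trXk} with the self-consistency equation supplied by the lifted Hamiltonian $\what H^\La_k$ and the deterministic quasi-locality \eqref{CT}.

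\emph{Reduction to a single eigenfunction.} Since the $S_i$ are pairwise disjoint (they are at distance $\ge 2\ell+1\ge 3$), the projections $P_-^{S_i}$ commute and $Q:=\prod_{i=1}^{k+1}P_-^{S_i}$ is an orthogonal projection. Fixing an orthonormal eigenbasis $\{\psi_a\}$ of $\Ran P_{I_{\le k}}$ with $H^\La\psi_a=E_a\psi_a$, $E_a\le(k+\tfrac34)\tfd$, and writing $QP_{I_{\le k}}Q=\sum_a|Q\psi_a\rangle\langle Q\psi_a|$, one gets
\[
\big\|P_{I_{\le k}}Q\big\|^2=\big\|QP_{I_{\le k}}Q\big\|\le\sum_a\|Q\psi_a\|^2\le r\,\max_a\|Q\psi_a\|^2,\qquad r:=\tr P_{I_{\le k}}\le k|\La|^{2k}+1,
\]
the last inequality by $P_{I_{\le k}}\le\chi_{\check I_{\le k}}(H^\La)$ and \eqref{trXk}. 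So it suffices to bound $\E\big(\max_a\|Q\psi_a\|\big)$.

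\emph{Self-consistency and the quasi-local expansion.} Fix an eigenfunction $\psi=\psi_a$ and $E=E_a$. By \eqref{eq:hatH1'}, $\what H^\La_k\ge(k+1)\tfd I>(k+\tfrac34)\tfd I$, so $E\notin\sigma(\what H^\La_k)$, $\|\what R^\La_{k,E}\|\le 8\tfd^{-1}$, and $(\what H^\La_k-E)\psi=k\tfd\,\what Q^\La_{\le k}\psi$ gives
\[
Q\psi=k\tfd\,Q\,\what R^\La_{k,E}\,\phi,\qquad \phi:=\what Q^\La_{\le k}\psi,\quad\|\phi\|\le1 .
\]
Here $\phi$ is a combination of basis vectors $\phi_A$ with $A$ having at most $k$ connected components (plus a multiple of $\Omega_\La$, which $Q$ kills). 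As the $\ell$-neighborhoods $[S_j]^\La_\ell$, $j=1,\dots,k+1$, are pairwise disjoint, insert on the right of $\what R^\La_{k,E}$ the identity $I=\sum_{E_0\subseteq\{1,\dots,k+1\}}\big(\prod_{j\in E_0}P_-^{[S_j]^\La_\ell}\big)\big(\prod_{j\notin E_0}P_+^{[S_j]^\La_\ell}\big)$. For each $E_0$ missing an index $j_0$, commuting $P_+^{[S_{j_0}]^\La_\ell}$ next to $\what R^\La_{k,E}$, using $Q=\big(\prod_{i\ne j_0}P_-^{S_i}\big)P_-^{S_{j_0}}$, the estimate $\|P_-^{S_{j_0}}\what R^\La_{k,E}P_+^{[S_{j_0}]^\La_\ell}\|\le\tfrac1{\Delta_0}\Upsilon^\La_{S_{j_0}}\e^{-m_0(\ell+1)}$ (from \eqref{CT} and the argument of Remark~\ref{remconn}), and $\|\phi\|\le1$, each such term has norm $\le\tfrac1{\Delta_0}\Upsilon^\La_{max}\e^{-m_0\ell}$; summing them contributes $C_k\,\Upsilon^\La_{max}\,\e^{-m_0\ell}$.

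\emph{The diagonal term and conclusion.} The remaining term $k\tfd\,Q\,\what R^\La_{k,E}\big(\prod_{j=1}^{k+1}P_-^{[S_j]^\La_\ell}\big)\phi$ is the crux: $\prod_jP_-^{[S_j]^\La_\ell}\phi$ is supported on configurations occupying all $k+1$ neighborhoods while having at most $k$ clusters, so some cluster must bridge two of the $[S_j]^\La_\ell$, yielding a fully occupied interval $J$ of length $\ge2\ell+1$ in the intervening gap, disjoint from all $S_i$. Such a bridge costs field energy $\ge\sfd+\lambda\,\omega_J$ with $\omega_J=\sum_{b\in J}\omega_b$ on the relevant diagonal subspace, and running over the $\le|\La|$ such intervals $J$ a large-deviation bound for these sums (as in \cite[Eq.~(3.50)]{EK22}), together with the lift built into $\what H^\La_k$, makes the $\E$ of this contribution $\le C_k|\La|^{c}\e^{-m_0\ell}$ for a suitable power $c$. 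Collecting the two contributions, multiplying by $\sqrt r$, and absorbing $\Upsilon^\La_{S_i}\le|\La|$ and the leftover powers into $|\La|^{2k+1}$ yields \eqref{PQPLD}. (Alternatively this step can be run as an induction on $k$, peeling off one region and lowering the window from $I_{\le k}$ to $I_{\le k-1}$; the base case $k=0$ is $P_{I_{\le0}}P_-^{S_1}=Q_0^\La P_-^{S_1}=0$.) The main obstacle is precisely this last step — converting the pigeonhole fact that $\le k$ clusters cannot occupy $k+1$ sets pairwise $(2\ell+1)$-separated without a long bridging droplet into an operator estimate and then defeating such droplets with the randomness (the deterministic \eqref{CT} alone does not suppress bridges, which is why an expectation appears in \eqref{PQPLD}), all while keeping only a single factor $\Upsilon^\La_{max}$ and the stated power of $|\La|$.
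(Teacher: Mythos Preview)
The overall architecture --- reducing to the self-consistency equation $Q\psi = k\tfd\,Q\,\what R_{k,E}^\La\,\what Q_{\le k}^\La\psi$ and then peeling off the $P_+^{[S_j]_\ell^\La}$ pieces via the deterministic quasi-locality \eqref{CT} --- is sound and parallels the paper's contour-integral version of the same manipulation. The genuine gap is in the diagonal term, exactly where you flag the difficulty.

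Your claim that a cluster bridging two of the $[S_j]_\ell^\La$ must contain a fully occupied interval of length $\ge 2\ell+1$ is false. From $\dist_\La(S_i,S_j)\ge 2\ell+1$ one only gets $\dist_\La\big([S_i]_\ell^\La,[S_j]_\ell^\La\big)\ge 1$; the enlarged neighborhoods may be adjacent, and then a two-site cluster already bridges them. (Take $k=1$, $\ell=1$, $S_1=\{0\}$, $S_2=\{3\}$, $A=\{1,2\}$: one cluster, meeting both $[S_1]_1=\{-1,0,1\}$ and $[S_2]_1=\{2,3,4\}$, and disjoint from $S_1\cup S_2$.) Without a bridge whose length grows with $\ell$, the large-deviation bound on $\omega_J$ yields no decay in $\ell$. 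Even granting a long bridge, the sketch does not explain how a diagonal energy lower bound on configurations $\phi_A$ controls $\big\|Q\,\what R_{k,E}^\La\prod_j P_-^{[S_j]_\ell^\La}\,\what Q_{\le k}^\La\psi\big\|$: the resolvent $\what R_{k,E}^\La$ is not diagonal and $\psi$ is not a basis vector, so ``field energy on the relevant diagonal subspace'' is not an operator inequality you can insert. The proposed alternative (induction on $k$, ``lowering the window'') is not fleshed out enough to assess.

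The paper closes the diagonal term by a different mechanism. Instead of arguing about bridges, it inserts the \emph{particle-number} cutoff $\Theta_{k,\ell}=\chi_{[0,2\ell+k]}(\cN^\La)$, which commutes with $H^\La$, $\what R_{k,z}^\La$, $\what Q_{\le k}^\La$, and all $P_\pm^{\,\cdot}$. The cost of replacing $P_{I_{\le k}}$ by $P_{I_{\le k}}\Theta_{k,\ell}$ is the large-deviation estimate $\P\{P_{I_{\le k}}\ne P_{I_{\le k}}\Theta_{k,\ell}\}\le C_k|\La|^{2k+1}e^{-d_\mu\ell}$ (low-energy states rarely carry many particles). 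With $\Theta_{k,\ell}$ present, the surviving term is $\what Q_{\le k}^\La\,\Theta_{k,\ell}\prod_i P_-^{[S_i]_\ell^\La}(\cdots)$, and this product of projections is identically zero: a configuration with at most $2\ell+k$ particles arranged in at most $k$ clusters cannot meet all $k+1$ of the $(2\ell+1)$-separated regions. The randomness thus enters only through the particle-number large deviation on $P_{I_{\le k}}$, not through any bridge-by-bridge field estimate. This is the step your argument is missing.
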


%%%%%%%%%%%%%%%%%%%%%%%%%%%%
%%%%%%%%%%%%%%%%%%%%%%%%%%%%
\begin{proof} 
Note that  $P_{ I_{\le  k}}\prod_{i=1}^{k+1} P_-^{S_i}=P_{ I_{k}}\prod_{i=1}^{k+1} P_-^{S_i}$.
We can represent $P_{ I_{k}}$ as a contour integral
\be
P_{ I_{ k}}=\frac1{2\pi i}P_{ I_{ k}}\oint_\Gamma R_{z}^{\Lambda}dz,
\ee
where $\Gamma$ is defined by $\Gamma=\set{z\in\C:\ \min_{x\in  I_{ k}}|x-z|=\frac18\tfd}$. Note that  $\norm{P_{ I_{k}}R_{z}^{\Lambda}}\le \frac  8{\fd}$ for any $z\in\Gamma$.

Using  \eqref{eq:resmodl}   and \eqref{eq:hatH1'}, we deduce that
\be
P_{ I_{ k}}=\frac{k\tfd }{2\pi i}P_{ I_{ k}}\oint_\Gamma R^{\Lambda}_{z} \what Q^\Lambda_{\le k}\what  R^{\Lambda}_{k,z}dz.
\ee

Let   $\Theta_{k,\ell}=  \chi_{[0,2\ell +k]}\pa{\cN^\La}$.  Note that  $[H^\La, \Theta_{k,\ell}]=0$
and   $[P_\pm^B, \Theta_{k,\ell}]=0$ for $B\subset \La$.   Moreover, it follows from \eq{dSiSj} that
\beq\label{ThetaPPP}
\Theta_{k,\ell} \prod_{i=1}^{k+1} P_-^{[S_i]^\La_{\ell}} =\Theta_{k,\ell} Q^\La_{>k }\prod_{i=1}^{k+1}  P_-^{[S_i]^\La_{\ell}}.
\eeq

Since  \eqref{CT} and \eq{Pconnectedcomp} yield
 \be
\Big\|P_+^{[S_i]^\La_{\ell}}\what  R^{\Lambda}_{k,z}P_-^{S_i}\Big\|\le C \Upsilon^\La_{{S_i}} \e^{-m_0\ell}\qtx{for} i=1,2,\ldots k+1,
\ee
we have, using $P_+^{[S_i]^\La_{\ell}}+ P_-^{[S_i]^\La_{\ell}}= {I}_{\cH_\La}$, that
\be\label{PQPi2}
\Big\|{ \what Q^\Lambda_{\le k}\what  R^{\Lambda}_{k,z}\prod_{i=1}^{k+1} P_-^{S_i}\Theta_{k,\ell}}\Big\| 
&  \le C (k+1) \Upsilon^\La_{max}\e^{-m_0\ell}+\Big\|{ Q^\Lambda_{\le k}Q_{>k }\Theta_{k,\ell}\prod_{i=1}^{k+1}  P_-^{[S_i]^\La_{\ell}} \what  R^{\Lambda}_{k,z}}\Big\| \\
&=  { C} (k+1)\Upsilon^\La_{max} \e^{-m_0\ell}= C_k  \Upsilon^\La_{max}\e^{-m_0 \ell},
\ee 
where we used \eq{ThetaPPP}  and  $ Q^\Lambda_{\le k}Q_{>k }=0$.

To prove \eq{PQPLD}, 
by a large deviation estimate (see \cite[Eqs (5.18)--(5.23)]{EK22}) we have 
\be\label{PneT}
&\P\set{P_{I_{\le k}}\ne  P_{I_{\le k}}\Theta_{k\ell} } = \P\set{   P_{I_{\le k}} \chi_{(2\ell +k,\abs{\La}]}\pa{\cN^\La} \ne 0     }\\  & \quad =
\P\set{ \sigma\pa{H^\La \chi_{(2\ell +k,\abs{\La}]}\pa{\cN^\La}}\cap I_{\le k}\ne \emptyset} 
\le C_{k} \abs{\La}^{2k+1} \e^{-d_\mu \ell},
\ee
where $d_\mu$ depends only on the probability distribution $\mu$.  The estimate \eq{PQPLD} follows  by assuming, without loss of generality, that $m_0\le d_\mu$.
\end{proof}

 \begin{remark}   We have  the following consequence of Lemma~\ref{lem:detbnP} and \eqref{eq:eigencor5}.
Consider    $i, j_1,j_2,\ldots, j_{k}\in \La$ such that     $ \min_{s\ne r}   \abs{j_r-j_s}  \ge 2 \ell +1 $
and $\min_{s} \abs{i-j_s}\ge 3 \ell+1$. Then
\be
\E \pa{\sup_{\substack{f\in B(I_{ k}):\\ \|f\|_\infty\le1}} \norm{ \cN_{i}  f(H^\La)   \cN_{j_1}  \cN_{j_2} \ldots \cN_{j_{k}} } } &\le C_k \abs{\Lambda}^{\max\set{\xi_k,2k+1}} \e^{-\theta_k \ell}.
\ee
For $k=1$ this bound has been established in \cite{EKS1}.
It is proved as follows:
\be \notag
  \cN_{i}  f(H^\La)   \cN_{j_1}  \cN_{j_2} \ldots \cN_{j_{k}} = \cN_{i}  f(H^\La) P_+^{[i]_\ell}  \cN_{j_1}  \cN_{j_2} \ldots \cN_{j_{k}}+\cN_{i}  f(H^\La) P_-^{[i]_\ell}  \cN_{j_1}  \cN_{j_2} \ldots \cN_{j_{k}}.
 \ee
 The expectation of the  first term (with the sup inside) is estimated by \eqref{eq:eigencor5}.  The  second term  is estimated  by \eq{PQPLD} using   $f(H^\La)= f(H^\La)P_{ I_{ k}}$ for $f\in B(I_{ k})$.

\end{remark}

\begin{lemma}\label{lem:Mmod}
Let ${T}$ be an observable  supported on an interval $\mathcal X\subset\La$ with $\norm{T}\le 1$,  let $k,\ell\in\N$, and assume    $\dist \pa{\cX, \Z\setminus \La} > 9(k+1)\ell +1$.  Consider the observables
\be\label{cTj}
{T}_j={T} P_+^{\partial_{3\ell}[\mathcal X]^{\Lambda}_{9j\ell}} \prod_{i=1}^{j-1}P_-^{\partial_{3\ell}[\mathcal X]^{\Lambda}_{9i\ell}} \sqtx{for} j=1,2,\ldots, k+1, \sqtx{with}\prod_{i=1}^{0}P_-^{\partial_{3\ell}[\mathcal X]^{\Lambda}_{9i\ell}}={I}.
\ee 
Then    
\be\label{defTj}
{T}_j=P_+^{\partial_{3\ell}[\mathcal X]^{\Lambda}_{9j\ell}} {T}_j P_+^{\partial_{3\ell}[\mathcal X]^{\Lambda}_{9j\ell}}= P_+^{\partial^{out}_{3\ell}[\mathcal X]^{\Lambda}_{9j\ell}} \otimes   P_+^{\partial^{in}_{3\ell}[\mathcal X]^{\Lambda}_{9j\ell}}\otimes T\prod_{i=1}^{j-1}P_-^{\partial_{3\ell}[\mathcal X]^{\Lambda}_{9i\ell}},   
\ee
where $\supp T_j=[\mathcal X]^{\Lambda}_{(9j+3)\ell} \qtx{and} \supp\pa{ T\prod_{i=1}^{j-1}P_-^{\partial_{3\ell}[\mathcal X]^{\Lambda}_{9i\ell}}}= [\mathcal X]^{\Lambda}_{(9j-6)\ell}$,
and  
such that 
\be\label{T-Tj666}
\E \norm{ \pa{T-\sum_{j=1}^{ { k+1}}{T}_j}_{P_{I_{ \le k}}}}\le  C_k  \abs{\La}^{2k+1} \e^{-m_0\ell}.
\ee

\end{lemma}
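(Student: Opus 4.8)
\textbf{Proof plan for Lemma~\ref{lem:Mmod}.}
The plan is to decompose the identity on $\cH_\La$ into a telescoping sum of projections onto ``first time a particle is seen in one of the shells $\partial_{3\ell}[\cX]^\La_{9i\ell}$''. Concretely, writing $R_i=\partial_{3\ell}[\cX]^\La_{9i\ell}$ for brevity, one has the partition of unity
\be\label{eq:telescopeplan}
{I}_{\cH_\La}=\sum_{j=1}^{k+1} P_+^{R_j}\prod_{i=1}^{j-1}P_-^{R_i} \;+\; \prod_{i=1}^{k+1}P_-^{R_i},
\ee
since $P_+^{R_j}+P_-^{R_j}={I}$ and all the $P_\pm^{R_i}$ commute (the shells $R_i$ are pairwise disjoint because $\dist_\La(R_i,R_{i'})\ge 3\ell\ge 1$ for $i\ne i'$, using the hypothesis $\dist(\cX,\Z\setminus\La)>9(k+1)\ell+1$ so that all the enlarged sets stay inside $\La$). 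Multiplying \eqref{eq:telescopeplan} on the left by $T$ and reading off the definition \eqref{cTj} of $T_j$ gives $T-\sum_{j=1}^{k+1}T_j = T\prod_{i=1}^{k+1}P_-^{R_i}$, so the content of \eqref{T-Tj666} is exactly the estimate on $\bigl(T\prod_{i=1}^{k+1}P_-^{R_i}\bigr)_{P_{I_{\le k}}}$.

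Next I would verify the structural claims in \eqref{defTj}. The middle equality — that $T_j$ factors as $P_+^{\partial^{out}_{3\ell}R_j}\otimes P_+^{\partial^{in}_{3\ell}R_j}\otimes T\prod_{i<j}P_-^{R_i}$ — follows because the shell $R_j=\partial_{3\ell}[\cX]^\La_{9j\ell}$ splits into its ``outer'' and ``inner'' halves $\partial^{out}_{3\ell}[\cX]^\La_{9j\ell}$ and $\partial^{in}_{3\ell}[\cX]^\La_{9j\ell}$, and $P_+^{R_j}=P_+^{\partial^{out}_{3\ell}R_j}\otimes P_+^{\partial^{in}_{3\ell}R_j}$ by the tensor-product definition \eqref{P+-S}; the remaining factor $T\prod_{i<j}P_-^{R_i}$ is supported on $[\cX]^\La_{9(j-1)\ell+3\ell}=[\cX]^\La_{(9j-6)\ell}$ (the largest inner shell involved is $R_{j-1}$, reaching out to distance $9(j-1)\ell+3\ell$), and these three supports are pairwise disjoint, giving both the factorization and the support identities $\supp T_j=[\cX]^\La_{(9j+3)\ell}$. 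The first equality $T_j=P_+^{R_j}T_jP_+^{R_j}$ is then immediate since $P_+^{R_j}$ is a projection commuting with everything in the factorization and fixing the first two factors.

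For the actual bound \eqref{T-Tj666} I would invoke Lemma~\ref{lem:detbnP} with the shells $S_i:=R_i=\partial_{3\ell}[\cX]^\La_{9i\ell}$, $i=1,\dots,k+1$. One checks $\dist_\La(R_i,R_{i'})\ge 9\ell-6\ell=3\ell\ge 2\ell+1$ for $i\ne i'$, so the separation hypothesis \eqref{dSiSj} of Lemma~\ref{lem:detbnP} holds (with the scale there taken to be $\ell$, or more carefully $\lfloor(3\ell-1)/2\rfloor\ge\ell$, which only changes constants). Moreover each shell $\partial_{3\ell}[\cX]^\La_{9i\ell}$ in a one-dimensional $\La$ has at most two connected components, so $\Upsilon^\La_{max}\le 2$. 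Since $\norm{T}\le1$, submultiplicativity gives
\be\label{eq:finalplan}
\E\norm{\pa{T\textstyle\prod_{i=1}^{k+1}P_-^{R_i}}_{P_{I_{\le k}}}}\le \E\norm{P_{I_{\le k}}\textstyle\prod_{i=1}^{k+1}P_-^{R_i}}\le C_k\,\Upsilon^\La_{max}\,\abs{\La}^{2k+1}\e^{-m_0\ell}\le C_k\abs{\La}^{2k+1}\e^{-m_0\ell},
\ee
using $\norm{(T\prod P_-^{R_i})_{P_{I_{\le k}}}}=\norm{P_{I_{\le k}}T\prod P_-^{R_i}P_{I_{\le k}}}\le\norm{T}\,\norm{P_{I_{\le k}}\prod P_-^{R_i}}$ (here one uses $P_{I_{\le k}}=P_{I_k}$ as in the proof of Lemma~\ref{lem:detbnP}, and that $\norm{XY}\le\norm{X}$ for a contraction appearing on the right after moving the second projection through). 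This is \eqref{T-Tj666}. The only genuinely delicate point is bookkeeping: making sure every enlarged set $[\cX]^\La_{(9j+3)\ell}$ used up to $j=k+1$ stays strictly inside $\La$ so that ``$\La$-distances'' equal honest distances and the shells are genuinely disjoint — this is precisely what the hypothesis $\dist(\cX,\Z\setminus\La)>9(k+1)\ell+1$ guarantees — and tracking that the separation between consecutive shells ($9\ell$ apart, each of width $3\ell$) really does exceed $2\ell+1$ so Lemma~\ref{lem:detbnP} applies cleanly.
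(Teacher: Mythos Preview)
Your proposal is correct and follows the same route as the paper: the telescoping identity \eqref{eq:telescopeplan}, then Lemma~\ref{lem:detbnP} applied to the shells $R_i=\partial_{3\ell}[\cX]^\La_{9i\ell}$ with the observation $\Upsilon^\La_{max}\le 2$, is exactly the paper's argument. Your write-up is in fact more detailed than the paper's, which does not spell out the verification of the structural claims in \eqref{defTj}. Two minor remarks: your parenthetical ``here one uses $P_{I_{\le k}}=P_{I_k}$'' is unnecessary and slightly misleading (that identity holds only after multiplying by $\prod P_-^{R_i}$, and in any case you are invoking the \emph{statement} of Lemma~\ref{lem:detbnP}, which is already phrased with $P_{I_{\le k}}$); and the shell separation is actually $3\ell+1$ rather than $3\ell$, so the condition $\ge 2\ell+1$ holds for all $\ell\in\N$ without further comment.
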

%%%%%%%%%%%%%%%%%%%%%%%%%%%%
%%%%%%%%%%%%%%%%%%%%%%%%%%%%
\begin{proof}
We decompose
\be
{I}=\prod_{i=1}^{  k+1}P_-^{\partial_{3\ell}[\mathcal X]^{\Lambda}_{9i\ell}}\,+\,\sum_{j=1}^{  k+1}P_+^{\partial_{3\ell}[\mathcal X]^{\Lambda}_{9j\ell}} \prod_{i=1}^{j-1}P_-^{\partial_{3\ell}[\mathcal X]^{\Lambda}_{9i\ell}}.
\ee

It follows from  Lemma \ref{lem:detbnP},  using   $3\ell >2\ell +1$,  $\partial_{3\ell}[\mathcal X]^{\Lambda}_{9i\ell}\ne\emptyset$ for $i=1,2,\ldots,k+1$ since   $\dist \pa{\cX, \Z\setminus \La} > 9(k+1)\ell +1$, and  ${\Upsilon}^\La_ {{\partial_{3\ell}[\mathcal X]^{\Lambda}_{9i\ell}}} \le 2$,    that
 \be
\E \Big\|{\Big({T} \prod_{i=1}^{ k+1}P_-^{\partial_{3\ell}[\mathcal X]^{\Lambda}_{9i\ell}}\Big)_{P_{I_{ \le k}}} }\Big\|\le  C_k\abs{\La}^{2k+1} e^{- m_0\ell}.
\ee  

The estimate \eq{T-Tj666}  follows.
\end{proof}
%%%%%%%%%%%%%%%%%%%%%%%%%%%%
%%%%%%%%%%%%%%%%%%%%%%%%%%%%

\subsection{Decoupling}

 Let $A\subset \La$ be  an interval.  We consider the Hamiltonian
\be
 H^{A,A^c}=H^A + H^{A^c} \qtx{on} \cH_\La,
  \ee  
set $R_z^{A,A^c}=\pa{H^{A,A^c}-z}^{-1}$, and let
\be\label{eq:Gamma}
\Gamma^A= H^\La -  H^{A,A^c}= \sum_{\set{i,i+1}\in   \pmb{\partial}^\La A} h_{i,i+1}.
\ee 
It follows from \eq{tildeh}  
 that
\be\label{hPN}
\norm{P_+^{\set{i}}{h}_{i,i+1}}=\norm{P_+^{\set{i+1}}{h}_{i,i+1}}=  \tfrac  1 {2\Delta},
\ee
so
\beq\label{PGamma1}
 \norm{P_+^{A} \Gamma^A}\le \tfrac  1 \Delta  \qtx{and}   \norm{P_+^{A^c} \Gamma^A}\le \tfrac  1 \Delta.
 \eeq

We also set    $\tau_t^{A,A^c}(M)=\e^{it H^{A,A^c}}Me^{-it H^{A,A^c}}$, where $M$ is an observable on $\cH_\La$.

We fix an infinitely differentiable function  $\wtilde \Psi: \R \to [0,1]$ such that 
\be
\wtilde \Psi (u)=\begin{cases}     0 & \qtx{for} u \in (-\infty, -1] \\
1 & \qtx{for} u \in [ 0, {\tfrac 3 4} \tfd ]\\
0& \qtx{for} u \in [{ \tfrac 7 8} \tfd, \infty) 
\end{cases} ,
\ee
and set 
\be
\Psi_q(u)=\begin{cases}\wtilde  \Psi (u) & \qtx{for} u \in (-\infty, 0] \\
1 & \qtx{for} u \in [ 0, \pa{q + \tfrac 3 4} \tfd ] \\
\wtilde \Psi (u-q )& \qtx{for} u \in [\pa{q + \tfrac34} \tfd, \infty) 
\end{cases} \qtx{for} q \in \tfrac 12 \N.
\ee  
Note that  $\Psi_q$
is  an infinitely differentiable function  on the real line  such that  $0\le  \Psi_q \le 1$, and 
\be
\Psi_q(u)=\begin{cases}     0 & \qtx{for} u \in (-\infty, -1] \\
1 & \qtx{for} u \in [ 0, \pa{q + \tfrac 3 4} \tfd ] \\
0& \qtx{for} u \in [\pa{q + \tfrac 7 8} \tfd, \infty) 
\end{cases} .
\ee
We also define   $\Phi_{q,t}(u)= \Psi_q (u) \e^{itu}$ for $t\in \R$, so  $\Psi_q= \Phi_{q,0}$.  Note that
$\supp \Phi_{q,t} \subset [-1, \pa{q + \frac 7 8} \nfd]$.

We have
\be\label{MPPsi998}
P_{I\le q}= P_{I\le q}\Phi_q(H^\La)\le \Phi_q(H^\La) \le P_{\check I\le q}.
\ee
Thus, if $M$ is an   observable on $\cH_\La$,  we have
\beq
\norm{\pa{M}_{P_{I_{\le q}}}}   \le \norm{\pa{M}_{ \Psi_q} } \le   \norm{\pa{M}_{P_{\check I\le q}}}.
\eeq

\begin{lemma} \label{lemHS} Let   $q\in \frac 12\N$,  and assume Condition $\cL_q$. Then for  $t\in \R$,  $\ell \in \N$, $b\in \N$, and an interval $A\subset \La$,  we have
\be
\E \norm{\pa{\Phi_{q,t}(H^\La)-\Phi_{q,t}(H^{A,A^c})}P_+^{\partial_{b\ell} A}}\le C_q \scal{t}^3 \abs{\La}^{\xi_q}\e^{-b \theta_q \ell}.
\ee
\end{lemma}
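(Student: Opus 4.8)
Throughout write $P:=P_+^{\partial_{b\ell}A}$ and let $\partial^\La_1 A$ denote the (at most four) sites met by the bonds $\bpartial^\La A$. The plan is to combine the Helffer--Sjöstrand formula, the second resolvent identity and the quasi-locality bound \eqref{eq:mainbnd5}, the algebraic engine being that $\Gamma^A$ is annihilated, from either side, by the projection onto configurations carrying no particle on $\partial^\La_1 A$. Indeed, an explicit computation from \eqref{tildeh} gives $h_{i,i+1}\pa{P_+^{\set{i}}\otimes P_+^{\set{i+1}}}=0$ for every bond $\set{i,i+1}$, and since $\Gamma^A=\sum_{\set{i,i+1}\in\bpartial^\La A}h_{i,i+1}$ (by \eqref{eq:Gamma}) is supported on $\partial^\La_1 A$, this yields the operator identities $\Gamma^A=\Gamma^A P_-^{\partial^\La_1 A}=P_-^{\partial^\La_1 A}\Gamma^A$, while $\norm{\Gamma^A}\le 2$ by \eqref{nth}. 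Since $b,\ell\in\N$ one has $\partial^\La_1 A\subset\partial_{b\ell}A$ and $\dist_\La\pa{\partial^\La_1 A,\pa{\partial_{b\ell}A}^c}\ge b\ell$; and the claim is trivial if $A=\La$, so assume $\partial^\La_1 A\ne\emptyset$.

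First I fix a compactly supported almost-analytic extension $\wtilde\Phi_{q,t}$ of $\Phi_{q,t}$ of order two, supported in $\set{\abs{\Ima z}\le1}$ with $\Rea z\in\supp\Psi_q\subset\br{-1,\pa{q+\tfrac78}\tfd}$, satisfying $\abs{\bar\partial\wtilde\Phi_{q,t}(x+iy)}\le C_q\abs y^{2}\scal t^{3}$ for $\abs y\le1$, plus a cutoff remainder supported where $\abs y\asymp1$ and bounded by $C_q\scal t^{2}$ (using $\abs{\Phi_{q,t}^{(k)}}\le C_{q,k}\scal t^{k}$ and boundedness of $\supp\Psi_q$). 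The Helffer--Sjöstrand formula together with $R_z^\La-R_z^{A,A^c}=-R_z^\La\Gamma^A R_z^{A,A^c}$ and the insertion $\Gamma^A=\Gamma^A P_-^{\partial^\La_1 A}$ gives
\[
\pa{\Phi_{q,t}(H^\La)-\Phi_{q,t}(H^{A,A^c})}P=\frac1\pi\int_{\C}\bar\partial\wtilde\Phi_{q,t}(z)\,R_z^\La\,\Gamma^A\,P_-^{\partial^\La_1 A}\,R_z^{A,A^c}\,P\;dx\,dy ,
\]
so it suffices to bound $\E\norm{R_z^\La\Gamma^A P_-^{\partial^\La_1 A}R_z^{A,A^c}P}$ for $z$ in the bounded support of $\wtilde\Phi_{q,t}$ and integrate.

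Using $\norm{R_z^\La},\norm{R_z^{A,A^c}}\le\abs y^{-1}$, $\norm{\Gamma^A}\le2$, and the identity $R_z^{A,A^c}=R_z^\La+R_z^{A,A^c}\Gamma^A R_z^\La$ combined once more with $\Gamma^A=\Gamma^A P_-^{\partial^\La_1 A}$, one gets $\norm{P_-^{\partial^\La_1 A}R_z^{A,A^c}P}\le\pa{1+2\abs y^{-1}}\norm{P_-^{\partial^\La_1 A}R_z^\La P}$, hence
\[
\norm{R_z^\La\Gamma^A P_-^{\partial^\La_1 A}R_z^{A,A^c}P}\le 2\abs y^{-1}\pa{1+2\abs y^{-1}}\norm{P_-^{\partial^\La_1 A}R_z^\La P}.
\]
On $\supp\wtilde\Phi_{q,t}$ one has $\Rea z\le\pa{q+\tfrac78}\tfd$, and $\partial^\La_1 A$ has at most two connected components, so Condition $\cL_q$, Theorem~\ref{thmloc}, and Remark~\ref{remconn} give $\E\set{\norm{P_-^{\partial^\La_1 A}R_z^\La P}^{1/4}}\le C_q\abs\La^{\xi_q}\e^{-\theta_q b\ell}$; since also $\norm{P_-^{\partial^\La_1 A}R_z^\La P}\le\abs y^{-1}$ deterministically, this converts to
\[
\E\norm{P_-^{\partial^\La_1 A}R_z^\La P}\le\abs y^{-3/4}\,\E\set{\norm{P_-^{\partial^\La_1 A}R_z^\La P}^{1/4}}\le C_q\abs y^{-3/4}\abs\La^{\xi_q}\e^{-\theta_q b\ell},
\]
whence $\E\norm{R_z^\La\Gamma^A P_-^{\partial^\La_1 A}R_z^{A,A^c}P}\le C_q\abs y^{-11/4}\abs\La^{\xi_q}\e^{-\theta_q b\ell}$ for $\abs y\le1$. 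Feeding this and $\abs{\bar\partial\wtilde\Phi_{q,t}}\le C_q\abs y^{2}\scal t^{3}$ into the integral over the bounded support of $\wtilde\Phi_{q,t}$ (where $\abs y\le1$ and $\Rea z$ ranges over an interval of length $O(q)$), the radial integration $\int_{\abs y\le1}\abs y^{2-11/4}\,dy$ converges since $2-\tfrac{11}{4}>-1$, and the cutoff remainder contributes only $C_q\scal t^{2}\abs\La^{\xi_q}\e^{-\theta_q b\ell}$; this produces the asserted bound $C_q\scal t^{3}\abs\La^{\xi_q}\e^{-b\theta_q\ell}$.

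The step I expect to be the main obstacle is keeping the power of $\scal t$ down to three. The exponential decay can only be extracted through \eqref{eq:mainbnd5}, which controls merely the expectation of the \emph{quarter} power of the relevant norm; passing to the norm itself costs a factor $\abs y^{-3/4}$ (Jensen being useless here), and rerouting $R_z^{A,A^c}$ through $R_z^\La$ — needed since Theorem~\ref{thmloc} is stated only for $H^\La$ — costs a further $\abs y^{-1}$ beyond the $\abs y^{-1}$ already coming from the outer $R_z^\La$, for a total singularity $\abs y^{-11/4}$. A careless argument would then require an order-three almost-analytic extension and produce $\scal t^{4}$; the point is that $\abs y^{-11/4}$ is still integrable against $\abs y^{2}$, so the cheaper order-two extension suffices, provided the deterministic bound $\norm{R_z}\le\abs y^{-1}$ is used in tandem with \eqref{eq:mainbnd5} exactly as above. (One could equally avoid the rerouting by tensor-decomposing $H^{A,A^c}=H^A+H^{A^c}$ and invoking Theorem~\ref{thmloc} on $A$ and $A^c$ separately.)
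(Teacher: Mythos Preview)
Your proof is correct and follows the same Helffer--Sj\"ostrand/second-resolvent-identity/Theorem~\ref{thmloc} strategy as the paper. The one tactical difference is the resolvent ordering: the paper writes the difference as $-R_z^{A,A^c}\Gamma^A R_z^\La$, so that $R_z^\La$ sits adjacent to $P_+^{\partial_{b\ell}A}$ and Theorem~\ref{thmloc} applies directly, producing only an $|\Ima z|^{-7/4}$ singularity (handled by \eqref{HShigherorder} with $m=3$). Your opposite ordering $-R_z^\La\Gamma^A R_z^{A,A^c}$ forces the extra rerouting of $R_z^{A,A^c}$ back through $R_z^\La$ and yields $|\Ima z|^{-11/4}$, which you then correctly show is still integrable against an order-two almost-analytic extension, recovering the same $\scal{t}^3$. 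Simply swapping the two resolvents in the first identity would have spared you that detour (and the tensor-decomposition alternative you mention at the end).
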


\begin{proof}
We use the Helffer--Sj\"ostrand formula for {smooth}
functions $f$ of self-adjoint operators \cite{HeSj89,HuSi00}.  We consider  the norms
\begin{equation} \label{sdfn}
  \hnorm{f}_m := \sum_{r=0}^m \int_{\mathbb{R}}\!\mathrm{d}u\;
  |f^{(r)}(u)|\,(1 + \abs{u}^{2})^{\frac {r-1} 2}  , \quad  m=1,2,\ldots \,. 
\end{equation}
If $ \hnorm{f}_m < \infty$ with $m \ge 2$, then for any self-adjoint operator
$K$ we have
\begin{equation}\label{HS}
  f (K) = \int_{\R^{2}} \!\d\tilde{f}(z) \, (K-z)^{-1} ,
\end{equation}
where the integral converges absolutely in operator norm.  Here $z= x + i y$,
$\tilde{f}(z)$ is an \emph{almost analytic extension} of $f$ to the complex
plane, $\d\tilde{f}(z) := \frac 1 {2\pi}\partial_{\bar{z}}\tilde{f}(z)
\,\mathrm{d} x\, \mathrm{d} y $, with $\partial_{\bar{z}}= \partial_x + i
\partial_y$, and $|\d\tilde{f}(z)| := (2\pi)^{-1}
|\partial_{\,\overline{z}}\tilde{f}(z)| \,\mathrm{d} x\, \mathrm{d} y$.
Moreover, for all $p \ge 0$ we have
\begin{equation}\label{HShigherorder}
  \int_{\R^{2}} \! |\d\tilde{f}(z)| \;\frac{1}{|\mathrm{Im}\, z|^p}  \le c_p
  \  \hnorm{f}_m < \infty  \quad \text{for} \quad m \ge p+1
\end{equation}
with a constant $c_{p}$ (see \cite[Appendix B]{HuSi00} for details).

   By the  Helffer--Sj\"ostrand formula  we have (recall \eq{eq:Gamma})
\be
&\Phi_{q,t}(H^\La)-\Phi_{q,t}(H^{A,A^c})=  \int_{\R^{2}} \!\d\wtilde \Phi_{q,t}(z) \pa{ (H^\La-z)^{-1}-(H^{A,A^c}-z)^{-1}}
\\
& \quad =  - \int_{\R^{2}} \!\d\wtilde \Phi_{q,t}(z) {  (H^{A,A^c}-z)^{-1}\Gamma^A  (H^\La-z)^{-1}}.
\ee
Thus
\be
&\E \norm{\pa{\Phi_{q,t}(H^\La)-\Phi_{q,t}(H^{A,A^c})}P_+^{\partial_{b\ell} A}}\\
&  \quad \le \int_{\R^{2}}\abs{ \!\d\wtilde \Phi_{q,t}(z)} \E\norm{  (H^{A,A^c}-z)^{-1}\Gamma^A  (H^\La-z)^{-1}P_+^{\partial_{b\ell} A}}\\
&  \quad \le C \int_{[-1, \pa{q + \frac 7 8} \nfd] \times \R}\abs{ \!\d\wtilde \Phi_{q,t}(z)} \;|\mathrm{Im}\, z|^{-\frac74}\ \ \E\norm{   P_-^{\partial_{1} A}(H^\La-z)^{-1}P_+^{\partial_{b\ell} A}}^{\frac 14}\\
&  \quad \le C  \hnorm{\Phi_{q,t}}_3  \abs{\La}^{\xi_q}\e^{-b \theta_q \ell} \le C q \scal{t}^3 \abs{\La}^{\xi_q}\e^{-b\theta_q \ell}, 
\ee
where we used  $\Gamma^A =\Gamma^A  P_-^{\partial_{1} A}$ , \eq{HShigherorder},  
the fact that $\hnorm{\Phi_{q,t}}_3\le  C q \scal{t}^3 $ with a constant independent of $q$ by its construction, and Theorem~\ref{thmloc}.
\end{proof}

\begin{lemma}\label{lem:detachaA}  Let   $q\in \frac 12\N$,  and assume Condition $\cL_q$.  Let   $\ell \in \N$,  and consider  an interval
 $A\subset \La$. 
Let $T$ be an arbitrary observable with $\norm{T}\le1$. Then  for all $b\in \N$ we have
\be
&\E\norm{\pa{\tau_t \pa{ P_+^{\partial_{{b\ell}} A}TP_+^{\partial_{{b\ell}} A}}-  \Psi_q( H^{A,A^c})  \tau_t^{A,A^c}\pa{ P_+^{\partial_{{b\ell}} A}TP_+^{\partial_{{b\ell}} A}} \Psi_q( H^{A,A^c})    }_{ P_{I_{\le q}} }}\\
& \hskip40pt \le  C q \scal{t}^3 \abs{\La}^{\xi_q}\e^{-b \theta_q \ell}.
\ee
\end{lemma}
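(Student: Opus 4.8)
The plan is to strip off the energy compression and reduce the whole estimate to Lemma~\ref{lemHS}, so that no analytic input beyond that lemma is required. Abbreviate $S := P_+^{\partial_{b\ell} A}\,T\,P_+^{\partial_{b\ell} A}$; since $P_+^{\partial_{b\ell}A}$ is an orthogonal projection we have $\norm{S}\le 1$ and $S = P_+^{\partial_{b\ell}A}S = S\,P_+^{\partial_{b\ell}A}$. Recalling $\tau_t(S)=\e^{itH^\La}S\e^{-itH^\La}$ and that $P_{I_{\le q}}=P_{I_{\le q}}\Psi_q(H^\La)$ (cf.\ \eqref{MPPsi998}), I would insert this last identity on both sides of $P_{I_{\le q}}\tau_t(S)P_{I_{\le q}}$ and use the functional calculus identities $\Psi_q(K)\e^{itK}=\Phi_{q,t}(K)$ and $\e^{-itK}\Psi_q(K)=\Phi_{q,-t}(K)$ for self-adjoint $K$ (with $\Phi_{q,t}(u)=\Psi_q(u)\e^{itu}$) to rewrite both the full and the decoupled term; since $\norm{P_{I_{\le q}}}=1$ this gives
\be\label{sk:red}
&\norm{\pa{\tau_t(S) - \Psi_q(H^{A,A^c})\,\tau_t^{A,A^c}(S)\,\Psi_q(H^{A,A^c})}_{P_{I_{\le q}}}} \\
&\qquad\le \norm{\Phi_{q,t}(H^\La)\,S\,\Phi_{q,-t}(H^\La) - \Phi_{q,t}(H^{A,A^c})\,S\,\Phi_{q,-t}(H^{A,A^c})}.
\ee

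Next I would telescope the right-hand side of \eqref{sk:red}:
\be\label{sk:tel}
&\Phi_{q,t}(H^\La)\,S\,\Phi_{q,-t}(H^\La) - \Phi_{q,t}(H^{A,A^c})\,S\,\Phi_{q,-t}(H^{A,A^c}) \\
&\qquad = \pa{\Phi_{q,t}(H^\La)-\Phi_{q,t}(H^{A,A^c})}\,S\,\Phi_{q,-t}(H^\La) + \Phi_{q,t}(H^{A,A^c})\,S\,\pa{\Phi_{q,-t}(H^\La)-\Phi_{q,-t}(H^{A,A^c})}.
\ee
For the first term, inserting $S = P_+^{\partial_{b\ell}A}S$ and using $\norm{S}\le 1$ together with $\norm{\Phi_{q,-t}(H^\La)}\le 1$ (since $\abs{\Phi_{q,-t}}\le 1$) shows its norm is at most $\norm{\pa{\Phi_{q,t}(H^\La)-\Phi_{q,t}(H^{A,A^c})}P_+^{\partial_{b\ell}A}}$. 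For the second term, inserting $S = SP_+^{\partial_{b\ell}A}$ and using $\norm{\Phi_{q,t}(H^{A,A^c})}\le 1$, $\norm{S}\le 1$ shows its norm is at most $\norm{P_+^{\partial_{b\ell}A}\pa{\Phi_{q,-t}(H^\La)-\Phi_{q,-t}(H^{A,A^c})}}$; passing to adjoints and using $\overline{\Phi_{q,-t}}=\Phi_{q,t}$ ($\Psi_q$ being real-valued) together with the self-adjointness of $H^\La$ and $H^{A,A^c}$ shows this last quantity equals $\norm{\pa{\Phi_{q,t}(H^\La)-\Phi_{q,t}(H^{A,A^c})}P_+^{\partial_{b\ell}A}}$ as well. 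Taking expectations and invoking Lemma~\ref{lemHS} (valid for the interval $A\subset\La$ under Condition $\cL_q$) for each of the two resulting quantities then yields the bound $2Cq\scal{t}^3\abs{\La}^{\xi_q}\e^{-b\theta_q\ell}$, which after absorbing the factor $2$ into the constant is exactly the asserted estimate.

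I do not expect a genuine obstacle here: the Helffer--Sj\"ostrand representation, the $\abs{\mathrm{Im}\,z}^{-7/4}$ weight, and the quasi-locality input of Theorem~\ref{thmloc} have all been packaged into Lemma~\ref{lemHS}, and what remains is purely algebraic bookkeeping. The one point that needs a little care is that Lemma~\ref{lemHS} supplies the decay estimate only with $P_+^{\partial_{b\ell}A}$ to the \emph{right} of the resolvent difference, so for the second telescoped term in \eqref{sk:tel} one has to transpose via adjoints; this costs nothing precisely because $\Phi_{q,-t}$ is the complex conjugate of $\Phi_{q,t}$ and the two Hamiltonians are self-adjoint.
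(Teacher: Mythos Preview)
Your proof is correct and follows essentially the same route as the paper: reduce via $P_{I_{\le q}}=P_{I_{\le q}}\Psi_q(H^\La)$ to the norm of $\Phi_{q,t}(H^\La)S\Phi_{q,-t}(H^\La)-\Phi_{q,t}(H^{A,A^c})S\Phi_{q,-t}(H^{A,A^c})$, then invoke Lemma~\ref{lemHS}. The paper is terser and simply says ``Using Lemma~\ref{lemHS}, we conclude\ldots'', whereas you spell out the telescoping and the adjoint manipulation for the second term; this is exactly the bookkeeping the paper leaves implicit.
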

%%%%%%%%%%%%%%%%%%%%%%%%%%%%
%%%%%%%%%%%%%%%%%%%%%%%%%%%%
\begin{proof} 

We  have, recalling $P_{I_{\le q}}=\Phi_q(H^\La)P_{I_{\le q}}$, 
\be
&\norm{\pa{\tau_t \pa{ P_+^{\partial_{{b\ell}} A}TP_+^{\partial_{{b\ell}} A}} -    \Psi_q( H^{A,A^c})  \tau_t^{A,A^c}\pa{ P_+^{\partial_{{b\ell}} A}TP_+^{\partial_{{b\ell}} A}} \Psi_q( H^{A,A^c})    }_{ P_{I_{\le q}} }} \\
&  =  \left \|    \left(\Psi_q(H^\La)\tau_t \pa{ P_+^{\partial_{{b\ell}} A}TP_+^{\partial_{{b\ell}} A}}\Psi_q(H^\La) \right.  \right.\\
& \hskip110pt  -  \left.  \left. \Psi_q( H^{A,A^c})  \tau_t^{A,A^c}\pa{ P_+^{\partial_{{b\ell}} A}TP_+^{\partial_{{b\ell}} A}} \Psi_q( H^{A,A^c})   \right)_{ P_{I_{\le q}} }\right\| \\
&\le  \norm{{\Phi_{q,t}(H^\La)\pa{ P_+^{\partial_{{b\ell}} A}TP_+^{\partial_{{b\ell}} A}}\Phi_{q,-t}(H^\La) -    \Phi_{q,t}( H^{A,A^c})  \pa{ P_+^{\partial_{{b\ell}} A}TP_+^{\partial_{{b\ell}} A}}  \Phi_{q,-t}( H^{A,A^c}) }}.
\ee

Using Lemma~\ref{lemHS}, we conclude that
\be
&\E\norm{\pa{\tau_t \pa{ P_+^{\partial_{{b\ell}} A}TP_+^{\partial_{{b\ell}} A}}-  \Psi_q( H^{A,A^c})  \tau_t^{A,A^c}\pa{ P_+^{\partial_{{b\ell}} A}TP_+^{\partial_{{b\ell}} A}} \Psi_q( H^{A,A^c})    }_{ P_{I_{\le q}} }}\\
& \hskip40pt \le  C_q \scal{t}^3 \abs{\La}^{\xi_q}\e^{-b \theta_q \ell}.
\ee
\end{proof}

Note that 
 \be  \label{PsiPqM}
 &  \Psi_q( H^{A,A^c})= \Psi_q( H^{A,A^c})P_{{\check I_{\le q}}}(H^{A,A^c}), \qtx{so} \\
& \norm{\pa{\pa{M}_{ \Psi_q( H^{A,A^c})}}_{P_{{I_{\le q}}}}}\le  \norm{\pa{M}_{P_{{\check I_{\le q}}}(H^{A,A^c})} }.
\ee

\section{Proof of slow propagation of information}\label{secprop}

In this section we prove  Theorem~\ref{thm:localmodell} .  We start with the following lemma.

\begin{lemma}\label{thm:localmodell2}
 Let   $q\in \frac 12\N^0$,  and assume Condition $\cL_q$ is satisfied.  Then   there  exists a constant $C_q$  such that, for any given finite interval $\La \subset \Z$,  scale $\ell\in \N$, and all $t\in\R$, the following holds:

  \begin{enumerate}
\item  Let  $T$ be an observable  supported on an interval $\mathcal X\subset \La$ with $\norm{T}\le 1$, such that
\beq\label{P-T}
T= P_-^{\cX} T P_-^{\cX}.
\eeq
Then there exists an observable ${T}_t={T}(t,q,\ell)$, supported in $[\cX]^\La_{\beta_q \ell}$,   such that
\be\label{eq:locality2mg}
\E\norm{\pa{\tau^\La_t(T)-{T}_t}_{P^\La_{I_{\le q}}}}\le  C_q\langle t\rangle^{p_q}\,\abs{\Lambda}^{ \xi_{q}}\e^{-\theta_{q}\ell}.
\ee
where   $p_0=0$,  $p_q=2q+2$ for $q \in \frac 12\N$,  and $\beta_q$ is defined in  \eqref{betaq}.

\item  If $T$ is is an observable on $\La$ with  $\norm{T}\le 1$ of the form
\be\label{eq:Mspca}
T= P_+^{ [ \cY ]^{\La}_{2\ell -1}} P_-^{ \partial^{\La} [ \cY ]^{\La}_{2\ell}}\wtilde T   P_-^{ \partial^{\La} [ \cY ]^{\La}_{2\ell}} P_+^{ [ \cY ]^{\La}_{2\ell -1}}, \qtx{with}  \supp \wtilde T  ={ \partial^{\La} [ \cY ]^{\La}_{2\ell}},
\ee
where $\cY\subset \La$ is an interval, we can choose $T_t$, supported on $[\cY]^\La_{(\beta_q +2 )\ell+1}$ and satisfying \eq{eq:locality2mg},   such that
\be\label{eq:Mspca'}
{T}_t=P_+^\mathcal Y {T}_t P_+^\mathcal Y.
\ee 	
\end{enumerate}

 \end{lemma}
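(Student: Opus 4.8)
The plan is to prove parts (i) and (ii) together by induction on $q\in\frac12\N^0$, the step at level $q$ using the lemma at level $q-\frac12$ (note $\cL_q$ implies $\cL_{q-\frac12}$, the threshold $D_R$ of Theorem~\ref{thmloc} being monotone in $R$). The base case $q=0$ is immediate: since $\sigma(H^\La)\cap I_{\le 0}=\{0\}$ and $\tfd\cW^\La\le H^\La$, we have $P^\La_{I_{\le 0}}=\chi_{\{0\}}(H^\La)=P_+^\La$, the vacuum projection, and $P_+^\La P_-^\cX=0$; hence for $T=P_-^\cX TP_-^\cX$ — and likewise for the form in (ii), which also contains a $P_-$ on a shell — one gets $\pa{\tau^\La_t(T)}_{P^\La_{I_{\le 0}}}=0$, so $T_t=0$ works, consistently with $\beta_0=0$, $p_0=0$ and the requirement $T_t=P_+^\cY T_tP_+^\cY$ in (ii).

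For the inductive step in part (i) we may assume $\dist(\cX,\Z\setminus\La)>9(\hat q+1)\ell+1$ (otherwise $[\cX]^\La_{\beta_q\ell}=\La$ since $\beta_q\ge 9\hat q+13$, and $T_t:=\tau^\La_t(T)$ works). Lemma~\ref{lem:Mmod} with $k=\hat q$ then yields $T=\sum_{j=1}^{\hat q+1}T_j+\mathcal R$ with $\E\norm{\pa{\mathcal R}_{P^\La_{I_{\le q}}}}\le\E\norm{\pa{\mathcal R}_{P^\La_{I_{\le\hat q}}}}\le C_{\hat q}\abs{\La}^{2\hat q+1}\e^{-m_0\ell}$, where $T_j=P_+^{S_j}\pa{T\prod_{i<j}P_-^{S_i}}P_+^{S_j}$ with $S_j=\partial_{3\ell}[\cX]^\La_{9j\ell}$ and core supported in $[\cX]^\La_{(9j-6)\ell}$. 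Since $T=P_-^\cX TP_-^\cX$, the core of $T_j$ carries $j$ mutually $2\ell$-separated $P_-$-factors; for $j=\hat q+1$ this exceeds the bound of $\hat q$ clusters enforced by $P^\La_{I_{\le q}}$, so by Lemma~\ref{lem:detbnP} together with the large-deviation input behind \eqref{PneT} (a configuration hitting that many separated regions with so few clusters must contain a droplet of length of order $\ell$), that term is $\le C_q\abs{\La}^{2\hat q+1}\e^{-c\ell}$; hence only $j\le\hat q$, each supported in $[\cX]^\La_{(9\hat q+3)\ell}$, need be approximated.

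Fix $j\le\hat q$ and put $A=[\cX]^\La_{9j\ell}$. Lemma~\ref{lem:detachaA} with $b=3$ replaces $\tau^\La_t(T_j)$ by $\Psi_q(H^{A,A^c})\,\tau^{A,A^c}_t(T_j)\,\Psi_q(H^{A,A^c})$ modulo $C_q\langle t\rangle^3\abs{\La}^{\xi_q}\e^{-3\theta_q\ell}$; writing $T_j=P_+^{\partial^{\La,\mathrm{out}}_{3\ell}A}\otimes M_j$ with $M_j$ supported in $A$, the decoupled dynamics factorizes as $\tau^{A^c}_t\pa{P_+^{\partial^{\La,\mathrm{out}}_{3\ell}A}}\otimes\tau^{A}_t(M_j)$ (evolutions under $H^{A^c}$, $H^A$), and $\tau^{A}_t(M_j)$ is already supported in $A\subseteq[\cX]^\La_{9\hat q\ell}$. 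The non-local first factor and the non-local cutoff $\Psi_q(H^{A,A^c})$ are handled via $I=P_+^{A^c}+P_-^{A^c}$: on the $P_+^{A^c}$-sector, $\Psi_q(H^{A,A^c})P_+^{A^c}=\Psi_q(H^A)P_+^{A^c}$ is localized in $A$, $\tau^{A^c}_t\pa{P_+^{\partial^{\La,\mathrm{out}}_{3\ell}A}}P_+^{A^c}=P_+^{A^c}$ (the vacuum of $A^c$ is $\tau^{A^c}_t$-invariant), and $P_+^{A^c}$ is in turn traded for the bounded-range $P_+^{\partial^{\La,\mathrm{out}}_{3\ell}A}$; on the $P_-^{A^c}$-sector one cluster is spent in $A^c$, so $\Psi_q(H^{A,A^c})P_-^{A^c}$ confines $H^A$ to $\check I_{\le q-1}\subset I_{\le q-\frac12}$ (by \eqref{indq}) and the inductive hypothesis at level $q-\frac12$ applies to the $A$-observable (still of the required form, using conservation of $\cN^A$). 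Each trade produces one more $2\ell$-separated $P_-$, so every pass either terminates (too many separated $P_-$'s — killed by Lemma~\ref{lem:detbnP} and the droplet estimate) or lowers the level by $\frac12$; as at most $\hat q$ clusters are available the recursion stops after $O(\hat q)$ passes, the one invoking level $q-\frac12$ enlarging the support by $\beta_{q-\frac12}\ell$ and the bounded-range trades by at most $(9\hat q+13)\ell$ in all, so $T_t$ is supported in $[\cX]^\La_{\beta_q\ell}$ with $\beta_q=\beta_{q-\frac12}+9\hat q+13$; the power $\langle t\rangle^{p_q}$, $p_q=2q+2$, collects the $\langle t\rangle^3$ of Lemmas~\ref{lemHS} and \ref{lem:detachaA} over the finitely many passes, and $\abs{\La}^{\xi_q}$ is the prefactor of Theorem~\ref{thmloc}; throughout, \eqref{MPPsi998} and \eqref{PsiPqM} are used to move between $P_{I_{\le q}}$, $\Psi_q$ and $P_{\check I_{\le q}}$ on $H^\La$ and on $H^{A,A^c}$, and $H^\La$ commutes with $P^\La_{I_{\le q}}$. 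For part (ii), decouple a neighbourhood $\cY'\supseteq[\cY]^\La_{2\ell-1}$ of $\cY$ (still disjoint from the shell $\partial^\La[\cY]^\La_{2\ell}$) from its complement: the restriction of $T$ to $\cY'$ is the vacuum $P_+^{\cY'}$, which is $\tau^{\cY'}_t$-invariant, so the decoupled operator, and after the same trading/recursion the final $T_t$, has the form $P_+^{\cY'}\pa{\cdots}P_+^{\cY'}$, whence $T_t=P_+^\cY T_tP_+^\cY$ since $\cY\subseteq\cY'$ — this is \eqref{eq:Mspca'}; the complementary factor carries $\wtilde T$ with its $P_-$-shell, hence is of the part-(i) type (applied to the two connected components of $\partial^\La[\cY]^\La_{2\ell}$) at level $q-\frac12$, giving $T_t$ supported in $[\cY]^\La_{(\beta_q+2)\ell+1}$ and the bound \eqref{eq:locality2mg}.

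I expect the main obstacle to be exactly this bookkeeping around the non-local objects generated by the decoupling — the vacuum projection $P_+^{A^c}$, equivalently the mismatch between $\Psi_q(H^{A,A^c})$ and $P_{I_{\le q}}(H^\La)$: it must be replaced by a bounded-range projection, the resulting errors absorbed either by the cluster plus large-deviation bound of Lemma~\ref{lem:detbnP} or by the inductive hypothesis one level down, and the whole recursion arranged so that it terminates after $O(\hat q)$ steps with the support of $T_t$ growing by exactly $\beta_q\ell$ — which is what pins down the particular recursion for $\beta_q$ and the exponent $p_q$.
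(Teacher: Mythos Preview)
Your overall inductive scheme and the opening moves (base case, the reduction via Lemma~\ref{lem:Mmod}, the decoupling via Lemma~\ref{lem:detachaA}) are the same as the paper's. The gap is in how you localize the \emph{outer} factor $\tau_t^{A^c}\!\bigl(P_+^{\partial^{\mathrm{out}}_{3\ell}A}\bigr)$ after decoupling; your $P_+^{A^c}/P_-^{A^c}$ split does not do this. On the $P_+^{A^c}$-sector the operator collapses to $P_+^{A^c}$, which is still supported on all of $A^c$, and ``trading it for $P_+^{\partial^{\mathrm{out}}_{3\ell}A}$'' is not an estimate but exactly the problem you are trying to solve. On the $P_-^{A^c}$-sector you lower the energy available to $H^A$ and then apply the inductive hypothesis to the $A$-observable --- but $\tau_t^A(M_j)$ is already supported in $A$, so this gains nothing, while the $A^c$-factor remains completely non-local. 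Your iterated ``trades'' therefore never touch the object that needs to be localized, and the recursion you sketch does not close.

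The paper's argument runs in the opposite direction. The hypothesis $T=P_-^{\cX}TP_-^{\cX}$ gives $\wtilde M=P_-^{A}\wtilde M P_-^{A}$, and it is this $P_-^{A}$ that is spent: from $P_{\check I_{\le q}}^{A,A^c}P_-^{A}=P_{\check I_{\le q}}^{A,A^c}P_{\check I_{\le q}}^{A}P_{\check I_{\le q-1}}^{A^c}P_-^{A}$ one obtains the level reduction on the $A^c$-side, not the $A$-side. The outer factor is then handled by a Duhamel (fundamental theorem of calculus) identity,
\[
\tau_t^{A^c}\!\bigl(P_+^{\partial^{\mathrm{out}}_{3\ell}A}\bigr)=P_+^{\partial^{\mathrm{out}}_{3\ell}A}+\int_0^t \tau_s^{A^c}(D)\,ds,\qquad D=i\bigl[H^{A^c},P_+^{\partial^{\mathrm{out}}_{3\ell}A}\bigr],
\]
and the commutator $D$ is precisely of the form \eqref{eq:Mspca} on $A^c$ (a $P_+$ on a thickened $\cY$ sandwiching a $P_-$ on its boundary shell). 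This is why part~(ii) exists and why the induction is on parts~(i) and~(ii) simultaneously: one applies part~(ii) at level $q-\tfrac12$ to $D$ on $A^c$, obtaining $D_s$ with the built-in structure $D_s=P_+^{\partial^{\mathrm{out}}_{\ell}A}D_sP_+^{\partial^{\mathrm{out}}_{\ell}A}$. That structure is not decorative: it is what allows one to undo the non-local sandwich $\Psi_q(H^{A,A^c})$ at the end (a second application of Lemma~\ref{lem:detachaA}, now to $M_t$). The time integral is also where the extra power of $\langle t\rangle$ comes from, giving $p_q=p_{q-\frac12}+1$. Your sketch for part~(ii) has a related issue: after decoupling $\cY'$ from its complement, the complementary factor lives at level $q$, not $q-\tfrac12$ (the $\cY'$-side contributes zero energy, so nothing is spent), hence you cannot invoke the inductive hypothesis there as written; the paper instead derives part~(ii) by re-running the part~(i) argument and inserting $P_+^{\cY}$ into the inner approximant $\wtilde M_t$ via \eqref{eq:eigencor5}.
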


\begin{proof}  
The lemma is proved by induction on $q\in \frac 12\N^0$.  The lemma is obviously true for $q=0$ with $M_t=M$, $C_0=0$,     $p_0=0$,  $ \xi_0=0$, $\theta_0=0$.  Given $q\in \frac 12\N$, we assume the lemma is true for $q-\frac 1 2$ and  prove the lemma also holds for $q$.

So let  $\La \subset \Z$ be a finite interval. (We will often omit $\La$ from the notation.)
 We first  consider  an observable $M$ on $\La$  with $\norm{M}\le 1$ of the form
\be \label{TjM}
M=  P_+^{\partial^{out}_{3\ell} A} \otimes \wtilde M,
 \ee
 where $A\subset \La$ is an interval and  $ \wtilde M$ is an observable such that 
 \beq\label{TjM46}
 \supp  \wtilde M  = A ,\quad  \wtilde M =   P_+^{\partial^{in}_{3\ell} A}  \wtilde M P_+^{\partial^{in}_{3\ell} A},
 \eeq
 and
\be\label{TjM2}
\wtilde M =  P_-^{A} \wtilde M P_-^{A}.
\ee
Note that  $M= P_+^{\partial_{3\ell} A} M P_+^{\partial_{3\ell} A}$.

Lemma \ref{lem:detachaA} gives
\be\label{PPsiq}
&\E\norm{\pa{\tau^\La_t \pa{ M}-  \Psi_q( H^{A,A^c})  \tau_t^{A,A^c}\pa{ M   } \Psi_q( H^{A,A^c}) }_{P^\La_{I_{\le q}}} }\le  C q \scal{t}^3 \abs{\La}^{\xi_q}\e^{-3 \theta_q \ell}.
\ee

It follows from \eq{TjM}, \eq{TjM46} and \eq{TjM2} that
 \be
 \tau_t^{A,A^c}(M)= \tau_t^{A}(\wtilde M)\tau_t^{A^c}(P_+^{\partial^{out}_{3\ell} A})
 =P_-^A \tau_t^{A}(\wtilde M)  \tau_t^{A^c}(P_+^{\partial^{out}_{3\ell} A})P_-^A.
  \ee
  
Using
\beq
P_{\check I_{\le q}}^{A,A^c}P_-^{A}=  P_{\check I_{q}}^{A,A^c} P_{\check I_{q}}^{A} P_{\check I_{\le q-1}}^{A^c}P_-^{A}=P_{\check I_{\le q}}^{A,A^c} P_{\check I_{\le q}}^{A} P_{\check I_{\le q-1}}^{A^c}P_-^{A},
\eeq 
we get 
  \be\label{PAAcA}
&P_{\check I_{\le q}}^{A,A^c}\tau_t^{A}(\wtilde M)\tau_t^{A^c}(P_+^{\partial^{out}_{3\ell} A})P_{\check I_{\le q}}^{A,A^c}  = P_{\check I_{\le q}}^{A,A^c}P_-^A \tau_t^{A}(\wtilde M)\tau_t^{A^c}(P_+^{\partial^{out}_{3\ell} A})P_-^AP_{\check I_{\le q}}^{A,A^c} \\
& \qquad =P_{\check I_{\le q}}^{A,A^c}P_{\check I_{\le q}}^{A}\tau_t^{A}(\wtilde M)P_{\check I_{\le q}}^{A}\,P_{\check I_{\le q-1}}^{A^c}\tau_t^{A^c}(P_+^{\partial^{out}_{3\ell} A})P_{\check I_{\le q-1}}^{A^c}P_{\check I_{\le q}}^{A,A^c}.
 \ee

 We set
 \be\label{wtildeMt}
 {\wtilde M}_t = P_+^{\partial_{\ell}^{in}A}P_{\check I_{\le q}}^{A}\tau_t^{A}(\wtilde M)P_{\check I_{\le q}}^{A}P_+^{\partial_{\ell}^{in}A},
 \ee
 an observable with $\supp {\wtilde M}_t= A$.
Using  $\wtilde M =  P_+^{\partial^{in}_{3\ell} A}    \wtilde M P_+^{\partial^{in}_{3\ell} A}$,  $I- P_+^{\partial_{2\ell}^{in}A}=P_-^{\partial_{2\ell}^{in}A}$, and \eqref{eq:eigencor5}, we get
 \be\label{eq:incoq}
\E\norm{ P_{\check I_{\le q}}^{A}\tau_t^{A}(\wtilde M)P_{\check I_{\le q}}^{A}-P_{\check I_{\le q}}^{A}{\wtilde M}_tP_{\check I_{\le q}}^{A}}\le C_q \abs{\Lambda}^{ \xi_q}\e^{-2\theta_{q}\ell}. 
 \ee

To treat the term in $A^c$ in \eq{PAAcA}, note that $A^c=A^c_L \cup A^c_R$, where  
$A^c_L $ and $ A^c_R$ are subintervals of $\La$ (possibly empty, in each case they do not have to be considered);   $A^c_L $ is the interval to the left of $A$ and 
 $A^c_R $  the interval to the right of $A$. Since $\dist_\La \pa{A^c_L ,A^c_R}\ge  \abs{A}+1\ge 2$, we have  $H^{A^c}=H^{A^c_L}+H^{A^c_R}$, and   $P_+^{\partial^{out}_{3\ell} A}= P_+^{\partial^{out,L}_{3\ell} A}P_+^{\partial^{out,R}_{3\ell} A}$, where $\partial^{out,\#}_{3\ell} A =\partial^{out}_{3\ell} A\cap A^c_\#$ for $\#=L,R$. Thus we have
 \be\label{leftright}
 &P_{\check I_{\le q-1}}^{A^c}\tau_t^{A^c}(P_+^{\partial^{out}_{3\ell} A})P_{\check I_{\le q-1}}^{A^c}\\
&\quad  = 
 P_{\check I_{\le q-1}}^{A^c} \pa{P_{\check I_{\le q-1}}^{A^c_L}\tau_t^{A^c_L}(P_+^{\partial^{out,L}_{3\ell} A}) P_{\check I_{\le q-1}}^{A^c_L}}\pa{P_{\check I_{\le q-1}}^{A^c_R}\tau_t^{A^c_R}(P_+^{\partial^{out,R}_{3\ell} A}) P_{\check I_{\le q-1}}^{A^c_R}}P_{\check I_{\le q-1}}^{A^c}.
 \ee
 
The two expressions in parenthesis in \eq{leftright}  are treated the same way.    So let us consider the expression on the right.  Since  $\check I_{\le q-1}\subset  I_{\le q -\frac 12}$, we can  use the induction hypothesis on  the interval $A^c_R$. But it does not suffice to use it directly for the  observable $P_+^{\partial^{out,R}_{3\ell} A}$ because we seek an observable $Y_t=Y(t,q,\ell)$ on $A^c$  (we now drop the $R$ from the notation to simplify the exposition:  $A^c$ will  stand for $A^c_R$, $P_+^{\partial^{out,}_{3\ell} A}$ for $P_+^{\partial^{out,R}_{3\ell} A}$, etc.)  that not only satisfies an estimate like
\be\label{eq:obsO2a}
\E\norm{P_{\check I_{\le q-1}}^{A^c}\pa{\tau_t^{A^c}(P_+^{\partial^{out}_{3\ell} A})-Y_t}P_{\check I_{\le q-1}}^{A^c}}
\le C_{q-\frac 12}  \langle t\rangle^{p_{q-\frac 12}}\, \abs{A^c}^{ \xi_{q-\frac 12}}\e^{-\theta_{q-\frac 12}\ell},
\ee
but also satisfies
\be\label{eq:outco2}
Y_t=P_+^{\partial_{\ell}^{out}A} Y_t P_+^{\partial_{\ell}^{out}A}.
\ee

 Let $K(t)=\tau_t^{A^c}(P_+^{\partial^{out}_{3\ell} A})$. Then 
\be
\dot K(t)= \tau_t^{A^c}(D), \qtx{where} D:=i[H^{A^c},P_+^{\partial^{out}_{3\ell} A}],
\ee
and (for $t\ge 0$; the case $t\le 0$ can be treated in the same way),
\be\label{FTC}
K(t)-K(0)=\tau_t^{A^c}(P_+^{\partial^{out}_{3\ell} A})-P_+^{\partial^{out}_{3\ell} A}=  \int_0^t \tau_s^{A^c}(D)\ ds.
 \ee

Since   $P_+^{\partial_{3\ell}^{out}A}=    P_+^{\partial_{3\ell-1}^{out}A}P_+^{\partial_{3\ell}^{out}A}$, we have
\be
D=P_+^{\partial^{out}_{3\ell-1} A}\wtilde D P_+^{\partial^{out}_{3\ell-1} A}, \sqtx{where} \supp \wtilde D =\partial [ A]^{\La}_{3\ell}\sqtx{and}\wtilde D=P_-^{\partial [ A]_{3\ell}}\wtilde D P_-^{\partial [ A]_{3\ell}}.
\ee

Let us set   $B=\partial^{\La,out}_{\ell} A  \subset A^c$, so  $\partial^{\La,out}_{3\ell-1} A= [ B ]^{A^c}_{2\ell -1}$,
$\partial^\La[ A]^{\La}_{3\ell}=   \partial^{A^c} [ B ]^{A^c}_{2\ell}$. (Note one endpoint of $B$ is an endpoint of $A^c$.  We are ignoring that $A^c$ consists of possibly two intervals, we do the procedure separately on each one.)  We  can write
\be\label{Rii}
D= P_+^{ [ B ]^{A^c}_{2\ell -1}} P_-^{ \partial^{A^c} [ B ]^{A^c}_{2\ell}}\wtilde D P_-^{ \partial^{A^c} [ B ]^{A^c}_{2\ell}} P_+^{ [ B ]^{A^c}_{2\ell -1}}, \qtx{with}  \supp \wtilde D ={ \partial^{A^c} [ B ]^{A^c}_{2\ell}}.
\ee

 Since $D$ is an observable on $A^c$ of the form given in \eq{eq:Mspca}, and by the induction hypothesis 
the lemma  is true for $q- \frac 12$, it follows from Part (ii) of the theorem that there exists an observable 
$D_t=D(t,q-\frac 12,\ell)$ on $A^c$, supported on  
\beq
[B]^{A^c}_{(\beta_{q-\frac 12} +2 )\ell+1}=\partial^{\La,out}_{(\beta_{q-\frac 12} +3 )\ell+1} A,
\eeq
such that
\be\label{eq:obsO2a56}
\E\norm{P_{\check I_{\le q-1}}^{A^c}\pa{\tau_t^{A^c}(D)-D_t}P_{\check I_{\le q-1}}^{A^c}}
& \le \E\norm{P_{I_{\le q-\frac 12}}^{A^c}\pa{\tau_t^{A^c}(D)-D_t}P_{ I_{\le q-\frac 12}}^{A^c}}  \\
&\le C_{q-\frac 12}  \langle t\rangle^{p_{q-\frac 12}}\, \abs{A^c}^{ \xi_{q-\frac 12}}\e^{-\theta_{q-\frac 12}\ell},
\ee
and
\be\label{eq:outco'}
D_t=P_+^{B} D_t P_+^{B}=P_+^{\partial^{\La,out}_{\ell} A} D_t P_+^{\partial^{\La,out}_{\ell} A}.
\ee

It follows, using \eq{FTC}, that the observable on $A^c$ given by  (we now bring back the  $R$ index)
\beq\label{Yt}
Y\up{R}_t= P_+^{\partial^{\La, out,R}_{3\ell} A} +  \int_0^t \tau_s^{A^c_R}(D)\ ds,
\eeq
 is supported on $[\partial^{\La,out,R}_{\ell} A ]^{A^c_R}_{(\beta_{q-\frac 12} +3 )\ell+1}$,  and
 satisfies 
\be\label{eq:obsO2abbbb}
 \E\norm{
 P_{\check I_{\le q-1}}^{A^c_R}\pa{\tau_t^{A^c_R}(P_+^{\partial^{out,R}_{3\ell} A})-Y_t}
P_{\check I_{\le q-1}}^{A^c_R}     } 
\le C^\pr_{q-\frac 12}  \langle t\rangle^{{p_{q-\frac 12}}+1}\, \abs{A^c_R}^{ \xi_{q-\frac 12}}\e^{-\theta_{q-\frac 12}\ell},
\ee
and
\be\label{eq:outco444}
Y\up{R}_t=P_+^{\partial^{\La,out,R}_{\ell} A} Y\up{R}_t P_+^{\partial^{\La,out,R}_{\ell} A}.
\ee
We construct  $Y\up{L}_t$ in a similar way  for the interval $A^c_L$,
and define 
\beq\label{MtPA22}
M_t=M(t,q,\ell)= \wtilde M_t  Y\up{L}_t  Y\up{R}_t.
\eeq  
  It follows that
\be
\supp M_t= A\cup\partial^{\La,out}_{(\beta_{q-\frac 12} +3 )\ell+1} A =[A]^\La_{(\beta_{q-\frac 12} +3 )\ell+1}\subset [A]^\La_{(\beta_{q-\frac 12} +4 )\ell},
\ee
and
\be\label{MtPA}
M_t=  P_+^{\partial^{\La}_{\ell} A}M_t P_+^{\partial^{\La}_{\ell} A}.
\ee

We have, using    \eq{eq:incoq} and \eq{eq:obsO2abbbb}, 
\be\label{PsiMt}
&\E\norm{{ \Psi_q( H^{A,A^c}) \pa{ \tau_t^{A,A^c}\pa{ M   } - M_t}\Psi_q( H^{A,A^c}) }}  \le \E\norm{{P_{\check I_{\le q}}^{A,A^c} \pa{ \tau_t^{A,A^c}\pa{ M   } - M_t}P_{\check I_{\le q}}^{A,A^c} }}     \\
& \quad  \le 
 C_q \abs{\Lambda}^{ \xi_q}\e^{-2\theta_{q}\ell}  +  2C^\pr_{q-\frac 12}  \langle t\rangle^{{p_{q-\frac 12}}+1}\, \abs{A^c}^{ \xi_{q-\frac 12}}\e^{-\theta_{q-\frac 12}\ell} \\
  & \quad \le C_q \langle t\rangle^{{p_{q-\frac 12}}+1}
  \abs{\Lambda}^{\xi_q} e^{-\theta_{q-\frac 12}\ell}.
\ee

In addition, it follows from Lemma \ref{lem:detachaA} and   \eq{MtPA} that (recall $\Psi_q=\Psi_{q,0}$)
\be\label{PsiMPsiMt}
&\E\norm{\pa{ \Psi_q( H^{A,A^c})  M_t \Psi_q( H^{A,A^c}) -M_t}_{P^\La_{I_{\le q}}} }\\ &  \qquad =\E\norm{\pa{ \Psi_q( H^{A,A^c})  P_+^{\partial^{\La}_{\ell} A}M_t P_+^{\partial^{\La}_{\ell} A}\Psi_q( H^{A,A^c}) -P_+^{\partial^{\La}_{\ell} A}M_t P_+^{\partial^{\La}_{\ell} A}}_{P^\La_{I_{\le q}}} }\\
 &  \qquad  \le  C_q \scal{t}^3 \abs{\La}^{\xi_q}\e^{- \theta_q \ell}.
\ee 

Using  \eq{PPsiq},  \eq{PsiMt}, and \eq{PsiMPsiMt},   we get
\be\label{PPsiqMt345}
&\E\norm{\pa{\tau^\La_t \pa{ M}- M_t }_{P^\La_{I_{\le q}}} }\le \E\norm{\pa{\tau^\La_t \pa{ M}-  \Psi_q( H^{A,A^c})  \tau_t^{A,A^c}\pa{ M   } \Psi_q( H^{A,A^c}) }_{P^\La_{I_{\le q}}} } \\
&\quad + \E\norm{\pa{ \Psi_q( H^{A,A^c})  \tau_t^{A,A^c}\pa{ M   } \Psi_q( H^{A,A^c})- \Psi_q( H^{A,A^c})  M_t \Psi_q( H^{A,A^c}) }_{P^\La_{I_{\le q}}} }\\
&\qquad \qquad  + \E\norm{\pa{ \Psi_q( H^{A,A^c})  M_t \Psi_q( H^{A,A^c}) -M_t}_{P^\La_{I_{\le q}}} }\\
& \quad \le C_q \scal{t}^3 \abs{\La}^{\xi_q}\e^{-3 \theta_q \ell} + C_q \langle t\rangle^{{p_{q-\frac 12}}+1}
  \abs{\Lambda}^{\xi_q} e^{-\theta_{q-\frac 12}\ell}  +    C_q \scal{t}^3\abs{\La}^{\xi_q}\e^{- \theta_q \ell}\\
& \quad \le   C_q\langle t\rangle^{\max \set{p_{q-\frac 12}+1, 3}}   \abs{\Lambda}^{\xi_q} e^{-\theta_{q}\ell}.
\ee

We can now prove Part (i). Let   $T$ be an observable  supported on an interval $\mathcal X\subset \La$ with $\norm{T}\le 1$ satisfying \eq{P-T}.  If   $\dist \pa{\cX, \Z\setminus \La} \le\beta_q\ell$, there is nothing to prove since $[\cX]^\La_{\beta_q \ell}=\La$,  just take $T_t=\tau^\La_t(T)$. So assume $\dist \pa{\cX, \Z\setminus \La} > \beta_q\ell$.  As $\beta_q -9(\hat q +1)=\beta_{q-\frac 12}  +4\ge 4 $, we can use 
Lemma~\ref{lem:Mmod}, and  let $\what T=\sum_{j=1}^{\hat q +1} T_j$ where $T_j$ is given in \eq{defTj},
so $\norm{T_j}\le 1$, and observe that in view of   \eq{T-Tj666}
it suffices to prove Part (i) for each $T_j$, $j=1,2,\ldots ,\hat q +1$. 

Let $j=1,2,\ldots ,\hat q +1$ and set $ A_j= [\cX]_{9j\ell}$. It follows from  \eq{defTj} that
\be \label{Tjagain}
T_j=   P_+^{\partial^{out}_{3\ell} A_j} \otimes \wtilde T_j= P_+^{\partial_{3\ell} A_j}  T_j  P_+^{\partial_{3\ell} A_j}, \;
 \supp  \wtilde T_j  = A_j, \;\wtilde T_j =   P_+^{\partial^{in}_{3\ell} A_j}  \wtilde T_j  P_+^{\partial^{in}_{3\ell} A_j}.
\ee
 Moreover,  
 it follows from \eq{P-T} that
\be\label{Tjagain2}
\wtilde T_j  =  P_-^{A_j} \wtilde  T_jP_-^{A_j} \qtx{for} j=1,2,\ldots, \hat q +1.
\ee

Thus $T_j$ is an observable satisfying \eq{TjM} and 
\eq{TjM2} with $M=T_j$ and $A=A_j$.   Thus there exists an observable $(T_j)_t$ with support  
\be
[A_j]^\La_{(\beta_{q-\frac 12} +4 )\ell}=[\cX]^\La_{(\beta_{q-\frac 12} +4 +9j)\ell},
\ee
satisfying \eq{PPsiqMt345} for $T_j$.

Defining 
\beq\label{defTt}
T_t= \sum_{j=1}^{ { \hat q+1}}(T_j)_t,
\eeq 
an observable on $\La$  with
\be \label{defTt33}
\supp T_t= [\cX]^\La_{(\beta_{q-\frac 12} +4 +9(\hat q +1))\ell}= [\cX]^\La_{\beta_q\ell},
\mqtx{where} \beta_q=(\beta_{q-\frac 12} +9\hat q + 13 ),
\ee
we have, using also  \eq{T-Tj666}, 
\be\label{PPsiqMt99999}
&\E\norm{\pa{\tau^\La_t \pa{ T}- T_t }_{P^\La_{I_{\le q}}} }\le
  C_q\langle t\rangle^{p_q} \abs{\Lambda}^{\xi_q}\e^{- \theta_q \ell},\mqtx{with}p_q=\max \set{p_{q-\frac 12}+1, 3}.
\ee

Since $p_0=0$,
 we have $p_{\frac12}=3$, so $p_q= p_{q-\frac 12}+1> 3$  for $q\ge 1$.  It follows that  $p_q= 3 + 2(q-\frac 12)=  2q + 2$ for  $q\ge 1$.

 Using \eqref{betaq}, we have 
\be
\beta_{q-\frac 12} + 9q +13 \le \beta_q \le \beta_{q-\tfrac 12} + 9(q+\tfrac 12) +13 =\beta_{q-\tfrac 12} + 9q  + \tfrac {35} 2.
\ee
 Letting 
$\gamma_s= \beta_{\frac s 2}$ for $s\in \N^0$, we have $\gamma_0=0$  and $\gamma_{s-1} + \frac 92 s +  13\le \gamma_s\le \gamma_{s-1} + \frac 92 s + \frac {35} 2$ for $s\in \N$. It follows that for $s\in \N$ we have 
\be
\tfrac 92  \tfrac {s(s+1)} 2 +  13s\le \sum_{r=1}^s \pa{ \tfrac 92 r + 13 }\le 
\gamma_s \le \sum_{r=1}^s \pa{ \tfrac 92 r + \tfrac {35} 2}=  \tfrac 92  \tfrac {s(s+1)} 2 +  \tfrac {35} 2   s,
\ee
so  $  \tfrac 94 s^2 +\tfrac {61} 4 s \le    \gamma_s \le \tfrac 94 s^2 +\tfrac {79} 4 s $.
It follows that for  $ q\in  \tfrac 12\N$ we have  $ 9 q^2 + \frac {61}2 q \le\beta_q\le 9 q^2 + \frac {79}2 q$.

We now turn to Part (ii). If $T$ is is an observable on $\La$ with  $\norm{T}\le 1$ of the form given in \eq{eq:Mspca}, $T$ satisfies the hypothesis of Part (i) with $\cX=\supp T=  [ \cY ]^{\La}_{2\ell +1}$.
Thus the proof of Part (i) applies to this observable $T$, but we modify it as follows.  We observe that since
$T= P_+^{ [ \cY ]^{\La}_{\ell }} TP_+^{ [ \cY ]^{\La}_{\ell }}$  in view of \eq{eq:Mspca}, when we apply Lemma~\ref{lem:Mmod} the observable $T_j$ also satisfies  $T_j=P_+^{ [ \cY ]^{\La}_{\ell }}T_jP_+^{ [ \cY ]^{\La}_{\ell }}$ in view of \eq{defTj}.  Moreover, the corresponding observable $\wtilde T_j$ given in \eq{TjM} also satisfies $\wtilde T_j= P_+^{ [ \cY ]^{\La}_{\ell }}\wtilde T_j P_+^{ [ \cY ]^{\La}_{\ell }} $.  We define $(\wtilde T_j)_t$ as in \eq{wtildeMt}, and also define $   (\what T_j)_t=   P_+^{\cY}(\wtilde T_j)_t P_+^{\cY}$.  Since $I-P_+^{\mathcal Y}=P_-^{\cY}$
(note $A_j=[ [ \cY ]^{\La}_{2\ell +1}]^\La_{9j\ell}= [ \cY ]^{\La}_{(9j+2)\ell +1}$), we have
\be
&\E\norm{(\wtilde T_j)_t-(\what T_j)_t}\le
\E\norm{P_{\check I_{\le q}}^{A_j}\tau_t^{A_j}(\wtilde T_j)P_{\check I_{\le q}}^{A_j}-P_+^{\mathcal Y} P_{\check I_{\le q}}^{A_j}\tau_t^{A_j}(\wtilde T_j)P_{\check I_{\le q}}^{A_j}P_+^\mathcal Y}\le C_q \abs{\La}^{\xi_q}e^{-\theta_{q}\ell}
\ee
by \eqref{eq:eigencor5}.  As a result, we can replace $ (\wtilde T_j)_t$ by  $(\what T_j)_t$ for $j=1,2,\ldots, \hat q +1$  in the definition of $ (T_j)_t$ in \eq{MtPA22}, so now we have $ (T_j)_t= P_+^{\cY}(T_j)_t  P_+^{\cY}$.
As a consequence, $T_t$ defined as in \eq{defTt}  satisfies  \eq{eq:Mspca'}, and   Part (ii) is proven.  
\end{proof}

\begin{proof}[Proof of Theorem~\ref{thm:localmodell}]  Let   $q\in \frac 12\N^0$,  and assume  Condition $\cL_{q+\frac 12}$ is satisfied.  Let  $T$  be an  observable  supported on an interval $\mathcal X\subset \La$ with $\norm{T}\le 1$. As in the proof of Part (i) of Lemma \ref{thm:localmodell2}, we may assume $\dist \pa{\cX, \Z\setminus \La} >( \beta_{q+\frac12}+13 )\ell$, and   use Lemma~\ref{lem:Mmod}. Let 
$\what T=\sum_{j=1}^{\hat q +1} T_j = T_1 +\check T, \qtx{where}\check T=\sum_{j=2}^{\hat q +1} T_j $
with  $T_j$  given in \eq{defTj},
so $\norm{T_j}\le 1$, and, using \eq{T-Tj666}, 
it suffices to prove the theorem  for $T_1$ and $\check T$.

Each $T_j$,  $j=2,3,\ldots ,\hat q +1$, $T$ can  be treated  as in the proof of Part (i) of Lemma~\ref{thm:localmodell2},  see \eq{Tjagain}--\eq{PPsiqMt99999}.  Setting 
$\check T_t= \sum_{j=2}^{ { \hat q+1}}(T_j)_t$,  we get \eq{defTt33} and \eq{PPsiqMt99999} with  $\check T_t$ substituted for $T_t$. Note  that $\supp \check T=  [\cX]^\La_{\pa{9(\hat q +1)+3}\ell}$.

 Thus it only remains to prove the theorem for 
 $T_1={T} P_+^{\partial_{3\ell}[\mathcal X]_{9\ell}} =P_+^{\partial_{3\ell}[\mathcal X]_{9\ell}} {T} P_+^{\partial_{3\ell}[\mathcal X]_{9\ell}} $.   Note that $M=T_1$ satisfies \eq{TjM}--\eq{TjM46} with $A=  [\mathcal X]_{9\ell}$,  but   \eq{TjM2} may not hold.
We proceed as in the proof for $M$ in \eq{TjM} but without \eq{TjM2}, so  \eq{PAAcA} is replaced by
 \be\label{PAAcA33}
&P_{\check I_{\le q}}^{A,A^c}\tau_t^{A}(\wtilde M)\tau_t^{A^c}(P_+^{\partial^{out}_{3\ell} A})P_{\check I_{\le q}}^{A,A^c}  =P_{\check I_{\le q}}^{A,A^c}P_{\check I_{\le q}}^{A}\tau_t^{A}(\wtilde M)P_{\check I_{\le q}}^{A}\,P_{\check I_{\le q}}^{A^c}\tau_t^{A^c}(P_+^{\partial^{out}_{3\ell} A})P_{\check I_{\le q}}^{A^c}P_{\check I_{\le q}}^{A,A^c}.
 \ee

We continue the proof as before up to \eq{Rii}. $D$, defined in  \eq{Rii}, is an observable on $A^c_R$ of the form given in \eq{eq:Mspca}.  We now use that  Condition $\cL_{q+\frac 12}$ is satisfied, and hence we can apply 
Part (ii) of Lemma~\ref{thm:localmodell2} to the observable $D$  in the energy interval $I_{\le q+\frac 12}$.
 It follows  that there exists an observable 
$D_t=D(t,q+\frac 1 2,\ell)$ on $A^c_R$, supported on    
 \beq
[B]^{A^c_R}_{(\beta_{q+\frac 1 2} +2 )\ell+1}=\partial^{\La,out,R}_{(\beta_{q+\frac 1 2} +3 )\ell+1} A,
\eeq
 such that
\be\label{eq:obsO2a77}
\E\norm{\pa{\tau_t^{A^c_R}(D)-D_t}_{P^{A^c}_{\check I_{\le q}}}}&\le \E\norm{\pa{\tau_t^{A^c_R}(D)-D_t}_{P^{A^c_R}_{I_{\le q+\frac 12}}}} \\
&\le C_{q+\frac 12}  \langle t\rangle^{p_{q+\frac 12}}\, \abs{A^c}^{ \xi_{q+\frac 12}}\e^{-\theta_{q+\frac 12}\ell},
\ee
where we used  $\check I_{\le q}\subset I_{\le q+\frac 12}$, 
and
\be\label{eq:outco}
D_t=P_+^{B} D_t P_+^{B}=P_+^{\partial^{\La,out,R}_{\ell} A} D_t P_+^{\partial^{\La,out,R}_{\ell} A}.
\ee

Letting $Y_t\up{R}$ be as in \eq{Yt}, we have $\supp Y_t=  [\partial^{\La,out,R}_{\ell} A ]^{A^c_R}_{(\beta_{q+ \frac 12} +3 )\ell+1}$, 
\be\label{eq:obsO2abbbb4}
&\E\norm{P_{\check I_{\le q}}^{A^c_R}\pa{\tau_t^{A^c_R}(P_+^{\partial^{out,R}_{3\ell} A})-Y_t\up{R}}P_{\check I_{\le q}}^{A^c_R}} \le C^\pr_{q+\frac 12}  \langle t\rangle^{{p_{q+\frac 12}}+1}\, \abs{A^c}^{ \xi_{q+\frac 12}}\e^{-\theta_{q+\frac 12}\ell},
\ee
and we have \eq{eq:outco444}.

 We construct  $Y\up{L}_t$ in a similar way  for the interval $A^c_L$,
and define 
\beq\label{MtPA2255}
M_t=M(t,q,\ell)=\wtilde M_t  Y\up{L}_t  Y\up{R}_t,
\eeq  
 as in \eq{MtPA22}, so  
\be
\supp M_t= A\cup\partial^{\La,out}_{(\beta_{q+\frac 12} +3 )\ell+1} A =[A]^\La_{(\beta_{q+\frac 12} +3 )\ell+1}\subset [A]^\La_{(\beta_{q+\frac 12} +4 )\ell}= [\cX]^\La_{(\beta_{q+\frac 12} +13)\ell},
\ee
and we have \eq{MtPA}.

We have, using    \eq{eq:incoq} and \eq{eq:obsO2abbbb4}, 
\be\label{PsiMt4}
&\E\norm{{ \Psi_q( H^{A,A^c}) \pa{ \tau_t^{A,A^c}\pa{ M   } - M_t}\Psi_q( H^{A,A^c}) }}  \le \E\norm{{P_{\check I_{\le q}}^{A,A^c} \pa{ \tau_t^{A,A^c}\pa{ M   } - M_t}P_{\check I_{\le q}}^{A,A^c} }}     \\
& \quad  \le 
 C_q \abs{\Lambda}^{ \xi_q}\e^{-2\theta_{q}\ell}  + 2C^\pr_{q+\frac 12}  \langle t\rangle^{{p_{q+\frac 12}}+1}\, \abs{A^c}^{ \xi_{q+\frac 12}}\e^{-\theta_{q+\frac 12}\ell}\\
  & \quad \le C_{q+\frac 12} \langle t\rangle^{{p_{q+\frac 12}}+1}
  \abs{\Lambda}^{\max \set{\xi_q,\xi_{q+\frac 12} }}
 e^{-\min\set{2\theta_{q},\theta_{q+\frac 12}}\ell}\\
 & \quad \le C_{q+\frac 12} \langle t\rangle^{{p_{q+\frac 12}}+1}
  \abs{\Lambda}^{\xi_{q+\frac 12} }
 e^{-\theta_{q+\frac 12}\ell}.
\ee

In addition,  \eq{PsiMPsiMt} holds. 
Using  \eq{PPsiq},  \eq{PsiMt4}, and \eq{PsiMPsiMt},   we get
\be\label{PPsiqMt345'}
&\E\norm{\pa{\tau^\La_t \pa{ M}- M_t }_{P^\La_{I_{\le q}}} }\le \E\norm{\pa{\tau^\La_t \pa{ M}-  \Psi_q( H^{A,A^c})  \tau_t^{A,A^c}\pa{ M   } \Psi_q( H^{A,A^c}) }_{P^\La_{I_{\le q}}} } \\
&\quad + \E\norm{\pa{ \Psi_q( H^{A,A^c})  \tau_t^{A,A^c}\pa{ M   } \Psi_q( H^{A,A^c})- \Psi_q( H^{A,A^c})  M_t \Psi_q( H^{A,A^c}) }_{P^\La_{I_{\le q}}} }\\
&\qquad \qquad  + \E\norm{\pa{ \Psi_q( H^{A,A^c})  M_t \Psi_q( H^{A,A^c}) -M_t}_{P^\La_{I_{\le q}}} }\\
& \quad \le C_q \scal{t}^3 \abs{\La}^{\xi_q}\e^{-3 \theta_q \ell} +  C_{q+\frac 12} \langle t\rangle^{{p_{q+\frac 12}}+1}
  \abs{\Lambda}^{\xi_{q+\frac 12} }
 e^{-\theta_{q+\frac 12}\ell} +    C_q \scal{t}^3\abs{\La}^{\xi_q}\e^{- \theta_q \ell}\\
& \quad \le  C_{q+\frac 12}\langle t\rangle^{\max \set{p_{q+\frac 12}+1, 3}} \abs{\Lambda}^{\max \set{\xi_q,\xi_{q+\frac 12} }}\e^{- \theta_{q+\frac 12} \ell}\\
& =  C_{q+\frac 12}\langle t\rangle^{{p_{q+\frac 12}+1}} \abs{\Lambda}^{{\xi_{q+\frac 12} }}\e^{- \theta_{q+\frac 12} \ell}=  C_{q+\frac 12}\langle t\rangle^{2q+4} \abs{\Lambda}^{{\xi_{q+\frac 12} }}\e^{- \theta_{q+\frac 12} \ell}.
\ee

Setting $T_t= \what T_t +M_t$,  then $T_t$ is  supported in  $ [\cX]^\La_{(\beta_{q+\frac 12} +13)\ell}$ and, using  now  \eq{T-Tj666},  $T_t$  satisfies 
\eq{eq:locality2mgq}.
\end{proof}

\section{Proof of slow propagation of information, matrix elements version}\label{secpropmatrix}

\begin{proof}[Proof of Corollary \ref{cor:localmodell}]
 Let   $q\in \frac 1 2 \N$,  and assume  Condition $\cL_{q+1}$ is satisfied.   Set  
 \be\label{alpha}
  \alpha = \max \set{ 2 \tfrac {\xi_{q+\frac 12} } { \theta_{q+\frac 12}}, \frac{4\cl{q}} {c_\mu } +1},
  \ee
and consider a finite interval  $\La \subset \Z$ with $\abs{\Lambda}$ sufficiently large so 
\be\label{rcond}
 r    + 2(4r+2)r = 8r^2 + 5 r< \abs{\La}, \qtx{where} r=\cl{\alpha \ln\abs{\Lambda}}.   
 \ee
Let  $\ell\in \N$,  $t\in\R$, and  
$M_1, M_2 \subset \La$ with   $\abs{M_1}=\abs{M_2}=\what N\in [1,\abs{\La}]\cap \N$.

Suppose $\ell\ge r$.  In this case we
 pick $T_t$ as in Theorem~\ref{thm:localmodell}, so it follows from \eq{eq:locality2mgq} and \eq{alpha} that
\be\label{PiM1M2q1}
\E\norm{\pi_{M_1}\pa{\tau^\La_t(T)-{T}_t}_{P^\La_{I_{\le q}}}\pi_{M_2}}\le \E\norm{\pa{\tau^\La_t(T)-{T}_t}_{P^\La_{I_{\le q}}}}\le    C_{q}\langle t\rangle^{2q+4} e^{- \frac 1 2\theta_{q+\frac 12}\, \ell}.
\ee

Thus we only need to consider the case   $ \ell < r$. Suppose first that $ \what N\ge r $. In this case we use a large deviation argument.
On the complement of the event ${\mathcal B_{\cl{q}}^{\what N}}$ we have $\chi_\La^{\what N}P^\La_{I_{\le q}}=0$  (see \cite[Eqs. (3.52 and (3.55)]{EK22}), and hence, taking $T_t=T$ we have
\be\label{LD111}
\E\norm{\pi_{M_1}\pa{\tau^\La_t(T)-{T}_t}_{P^\La_{I_{\le q}}}\pi_{M_2}}\le   2 \P \pa{\mathcal B_{\cl{q}}^{\what N}}
\le C_q\abs{\Lambda}^{2\cl{q}} \e^{- c_\mu\what  N} \le  C_q  \e^{- \frac 1 2 c_\mu r}\le C_q \e^{- \frac 1 2 c_\mu\ell},
\ee
where  in the second step we used the large deviation estimate \cite[Eq. (3.53)]{EK22} and  \eq{alpha}.

 It remains to consider  the case  $\what N < r$ and $\ell < r$.  It follows that $\abs{M_1\cup M_2} \le 2(r -1)$.
 We assume $\abs{\cX} \le  \ln \abs{\La}< r $  (note $\alpha>1$ by \eq{alpha}).
 It follows from \eq{rcond} that 
 there exists $j\in[0,2r]\cap \N^0$ such that 
 \be
 \pa{[\cX]^\La_{(2j+2)r}\setminus [\cX]^\La_{2jr}}\cap \pa{M_1 \cup M_2}=\emptyset,
 \ee  
and  set  $\cX_r:=[\cX]^\La_{(2j+1)\ell}$,  and observe that
 \be
 \abs{\cX_r} \le  \abs{\cX}+ 2(4r+1)r \le r +  2(4r+1)r= 8r^2 +3 r \le 11 r^2.
 \ee
 
Since $T$ is supported by $\cX\subset \cX_r$,  we  use Theorem \ref{thm:localmodell}  with $\La=\cX_r$ and $q+\frac 12$ instead of $q$, concluding that
there exists  an observable $T_\ell=T(t,q+\frac 12 ,\ell,\cX_r)$ supported by 
  $ [\cX]^{\cX_r}_{(13+\beta_{q+1} )\ell}\subset  [\cX]^{\La}_{(13+\beta_{q+1} )\ell}$,
such that 
\be\label{tautTt4}
&\E\norm{\pa{\tau^{\cX_r}_t(T)-{T}_t}_{P^{\cX_r}_{\check I_{\le q}}}} \le  \E\norm{\pa{\tau^{\cX_r}_t(T)-{T}_t}_{P^{\cX_r}_{I_{\le q+\frac 12}}}} 
\\  &  \quad \le C_{q+\frac 12}\langle t\rangle^{2q+5}  { \abs{\cX_r}}^{{\xi_{q+1} }}\e^{- {\theta_{q+1}}\ell}  \le C_{q+\frac 12}\langle t\rangle^{2q+5} \pa{\ln (11 r^2)}^{{\xi_{q+1} }}\e^{- {\theta_{q+1}}\ell}\\
& \quad \le C^\pr_{q}\langle t\rangle^{2q+5} \pa{\ln r}^{{\xi_{q+1} }}\e^{- {\theta_{q+1}}\ell}
\le C^{\prr}_{q}\langle t\rangle^{2q+5} \pa{\ln \abs{\La}}^{{\xi_{q+1} }}\e^{- {\theta_{q+1}}\ell}.
\ee

 Since   $\pi_{M_i}=\pi_{M_i}P_+^{\partial^\La_{r}\cX_r}= P_+^{\partial^\La_{r}\cX_r}\pi_{M_i}$, $i=1,2$, 
 we use Lemma \ref{lemHS}  and an argument similar to the proof  of Lemma \ref{lem:detachaA}   to deduce that
\be\label{Mtaupsi1}
&\E\norm{\pi_{M_1}{\pa{\tau^\La_t(T) -\Psi_q( H^{\cX_r, \cX_r^c})\tau^{\cX_r, \cX_r^c}_t(T)\Psi_q( H^{\cX_r, \cX_r^c})}}_{P^\La_{I_{\le q}}}\pi_{M_2}}\\
&  \quad  =\E\norm{\pi_{M_1}P_+^{\partial^\La_{r}\cX_r}{\pa{\tau^\La_t(T) -\Psi_q( H^{\cX_r, \cX_r^c})\tau^{\cX_r, \cX_r^c}_t(T)\Psi_q( H^{\cX_r, \cX_r^c})}}_{P^\La_{I_{\le q}}}P_+^{\partial^\La_{r}\cX_r}\pi_{M_2}}\\
&  \quad  \le \E\norm{P_+^{\partial^\La_{r}\cX_r}{\pa{\tau^\La_t(T) -\Psi_q( H^{\cX_r, \cX_r^c})\tau^{\cX_r, \cX_r^c}_t(T)\Psi_q( H^{\cX_r, \cX_r^c})}}_{P^\La_{I_{\le q}}}P_+^{\partial^\La_{r}\cX_r}}\\      &  \quad \le  C_q \scal{t}^3 \abs{\La}^{\xi_q}\e^{- \theta_q r}\le   C_q \scal{t}^3 \e^{- \frac 12\theta_q r},
\ee
where we used \eq{alpha} and $ \frac {\xi_{q} } { \theta_{q}} \le  \frac {\xi_{q+ \frac 12} } { \theta_{q+\frac zz1 2}}$.

We note that  $\tau^{\cX_r, \cX_r^c}_t(T)= \tau^{\cX_r}_t(T)$ on $\cH_\La$ since 
 $T$ is supported by $\cX\subset \cX_r$.  Thus
 \be\label{Mtaupsi2}
  &\E\norm{\pi_{M_1}{\pa{\Psi_q( H^{\cX_r, \cX_r^c})\pa{\tau^{\cX_r, \cX_r^c}_t(T)-T_t}\Psi_q( H^{\cX_r, \cX_r^c})}}_{P^\La_{I_{\le q}}}\pi_{M_2}}\\
  & \quad \le  \E\norm{{\pa{\Psi_q( H^{\cX_r, \cX_r^c})\pa{\tau^{\cX_r, \cX_r^c}_t(T)-T_t}\Psi_q( H^{\cX_r \cX_r^c})}}_{P^{\cX_r, \cX_r^c}_{\check  I_{\le q}}}}\\
  & \quad \le  \E\norm{{\pa{\tau^{\cX_r,}_t(T)-T_t}}_{P^{\cX_r}_{\check I_{\le q}}}} \le  C^{\prr}_{q}\langle t\rangle^{2q+5} \pa{\ln \abs{\La}}^{{\xi_{q+1} }}\e^{- {\theta_{q+1}}\ell},
  \ee
 where we used $P^{\cX_r, \cX_r^c}_{\check  I_{\le q} }=P^{\cX_r}_{\check  I_{\le q} }P^{\cX_r, \cX_r^c}_{\check  I_{\le q} }$ and \eq{tautTt4}
 
 It follows from \eq{Mtaupsi1} and \eq{Mtaupsi2} that
 \be\label{PiM1M2q2}
&\E\norm{\pi_{M_1}\pa{\tau^\La_t(T)-{T}_t}_{P^\La_{I_{\le q}}}\pi_{M_2}} \\
& \quad \le   \E\norm{\pi_{M_1}{\pa{\tau^\La_t(T) -\Psi_q( H^{\cX_r, \cX_r^c})\tau^{\cX_r, \cX_r^c}_t(T)\Psi_q( H^{\cX_r, \cX_r^c})}}_{P^\La_{I_{\le q}}}\pi_{M_2}}\\
& \qquad\quad  +  \E\norm{\pi_{M_1}{\pa{\Psi_q( H^{\cX_r, \cX_r^c})\pa{\tau^{\cX_r, \cX_r^c}_t(T)-T_t}\Psi_q( H^{\cX_r, \cX_r^c})}}_{P^\La_{I_{\le q}}}\pi_{M_2}}\\
& \quad \le   C_q \scal{t}^3 \e^{- \frac 12\theta_q r} + 
C^{\prr}_{q}\langle t\rangle^{2q+5} \pa{\ln \abs{\La}}^{{\xi_{q+1} }}\e^{- {\theta_{q+1}}\ell} \le C^{\prr\pr}_{q}\langle t\rangle^{2q+5} \pa{\ln \abs{\La}}^{{\xi_{q+1} }}\e^{-\min\set{ \frac 12 \theta_q,\theta_{q+1}}\ell}.
\ee

We conclude from \eq{PiM1M2q1},  \eq{LD111}, and \eq{PiM1M2q2} that for all $\ell \in \N$ there exists an observable ${T}_t={T}(t,q,\ell, \La, M_1\cup M_2)$, supported in  $ [\cX]^\La_{(13+\beta_{q+1} )\ell}$,  such that we have
\be\label{eq:locality2mgqcor9933}
\E\norm{\pi_{M_1}\pa{\tau^\La_t(T)-{T}_t}_{P^\La_{I_{\le q}}}\pi_{M_2}} \le C_{q}\langle t\rangle^{2q+5} \pa{\ln \abs{\La}}^{\xi_{q+1} }\e^{-\frac 12 \min\set{ \theta_q,\theta_{q+\frac 1 2}, c_\mu}\ell}.
\ee 
\end{proof}

\section*{Declarations}

\subsection*{\qquad Data availability}
We do not analyse or generate any datasets, because our work proceeds within a theoretical and mathematical approach. One can obtain the relevant materials from the references below.

\subsection *{\qquad  Funding and/or Conflicts of interests/Competing interests} \

Alexander Elgart was  supported in part by the NSF under grant DMS-1907435 and the Simons Fellowship in Mathematics Grant 522404.

The authors have no relevant financial or non-financial interests to disclose.

The authors have no competing interests to declare that are relevant to the content of this article.

\printbibliography

\end{document}